\documentclass[11pt]{article}
\usepackage[utf8]{inputenc}
\usepackage[T1]{fontenc}
\usepackage{lmodern}
\usepackage{geometry}
\usepackage{graphicx}
\usepackage[dvipsnames]{xcolor}
\usepackage[colorlinks=true]{hyperref}
\hypersetup{
    linkcolor=Magenta,
    citecolor=ForestGreen,
    urlcolor=Magenta
}
\usepackage{amsmath,amssymb,amsthm,amsfonts,latexsym,bbm,xspace,graphicx,float,mathtools,braket}
\usepackage{algorithm}
\usepackage{algpseudocode}
\usepackage{complexity}
\usepackage{appendix}
\usepackage{enumitem}
\usepackage[nameinlink,capitalize]{cleveref}  
\theoremstyle{plain}
\newtheorem{theorem}{Theorem}[section]
\newtheorem{lemma}[theorem]{Lemma}
\newtheorem{corollary}[theorem]{Corollary}
\newtheorem{proposition}[theorem]{Proposition}
\newtheorem{definition}[theorem]{Definition}
\renewcommand{\hat}[1]{\widehat{#1}}
\newcommand{\abs}[1]{\left|#1\right|}
\newcommand{\question}[1]{\par\medskip\noindent\textbf{Question.} \textit{#1}\par\medskip}
\DeclareMathOperator{\Tr}{Tr}

\renewcommand{\poly}{\operatorname{poly}}
\let\eps\varepsilon

\allowdisplaybreaks

\title{Learning Sparse Quantum States}
\author{
  Aniruddha Sen \\
  The University of Texas at Austin \\
  \href{mailto:aniruddhasen@utexas.edu}{\texttt{aniruddhasen@utexas.edu}}
}
\date{September 10, 2026}

\begin{document}

\maketitle

\begin{abstract}
We study the problem of tomography for $k$-sparse quantum states. In contrast to classical distribution learning, where tight sample and time complexity bounds in terms of support size are well understood, no non-trivial bounds were previously shown for this problem. We give the first near optimal algorithm for learning $n$-qubit $k$-sparse pure quantum states, obtaining fidelity at least $1-\eps$ with high probability using $\Tilde{O}(k/\eps)$ copies of the state and $\Tilde{O}(kn/\eps)$ time. Both bounds are optimal up to polylogarithmic factors. As an implication, we also obtain an algorithm with near optimal $\Tilde{O}(kr/\eps)$ sample complexity for learning $k$-sparse rank-$r$ mixed states, via the random purification channel technique. Obtaining time complexity nearly matching the sample complexity, for $r>1$, remains an important open question.
\end{abstract}

\section{Introduction}
Classical distribution learning algorithms are known to almost always scale with the sparsity for $k$-sparse distributions, since sampling from discrete distributions is naturally sensitive to the support \cite{kearns1994learnability, valiant2016instance, canonne2020short}. However, the quantum analogue of learning states seems to not immediately take advantage of such structure if it exists. Quantum state tomography is the problem of obtaining a classical description of a quantum state given multiple copies of it. We mainly focus on the tomography of pure quantum states, with the goal of obtaining sample and time complexity bounds which scale with the size of the support of the state in some fixed basis. We consider support in the computational basis throughout this paper, and define $k$-\textit{sparse} pure states as states of the form
\begin{equation}
\ket{\psi}=\sum_{x\in S}\alpha_x\ket{x},\qquad S\subseteq\{0,1\}^n,\quad |S| \le k
\end{equation}
So when $k=2^n$, we recover the definition of general pure states. Our main result is as follows.

\begin{theorem}\label{theorem:pure-main-theorem}
Given as input multiple copies of an unknown $n$-qubit $k$-sparse pure state $\ket{\psi}$, the support size $k$, and error parameters $\eps,\delta > 0$, there exists an algorithm (\cref{alg:k_sparse_pure_tomography}) that outputs a classical description of an estimate $\ket{\hat{\psi}}$ satisfying $|\bra{\psi}\hat{\psi}\rangle|^2 \ge 1 - \eps$ with probability at least $1-\delta$. The algorithm performs only single-copy measurements and requires $\Tilde{O}(k/\eps)$ samples and $\Tilde{O}(kn/\eps)$ time, when setting $\delta=1/n$, where the $\Tilde{O}$ suppresses logarithmic factors in $k,n$ and $1/\eps$.
\end{theorem}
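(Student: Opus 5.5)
The plan has three stages: recover (most of) the support, estimate the amplitudes on it, and assemble the estimate.

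\emph{Stage 1: support and magnitudes.} Measure $m=O(k\log(k/\delta)/\eps)$ copies of $\ket{\psi}$ in the computational basis, giving i.i.d.\ samples from $p(x)=|\alpha_x|^2$ on a support of size at most $k$. Let $\hat S$ be the set of observed strings. A coupon-collector and Chernoff argument shows that every $x$ with $p(x)\ge \eps/(4k)$ lands in $\hat S$ with probability $1-\delta/3$. Hence the unseen mass is at most $k\cdot \eps/(4k)=\eps/4$, and $|\hat S|\le k$. The empirical frequencies also estimate each $|\alpha_x|$. Instead of relying on entrywise estimates, though, I will use the standard fact that the empirical distribution is $O(\eps)$-close in squared Hellinger distance to $p$ once $m=\tilde O(k/\eps)$. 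This is the natural fidelity-like metric, since $\sum_x\sqrt{p(x)\hat p(x)}=1-H^2$.

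\emph{Stage 2: phases.} I use measurements that interfere pairs of support elements. Fix a reference $x_0\in\hat S$ with large weight. For each $x\in\hat S$, let $j$ be a coordinate where $x$ and $x_0$ differ. Apply CNOTs controlled on qubit $j$ to map $x\oplus x_0$ into a single bit flip on $j$, apply a Hadamard (or $H S^\dagger$) on $j$, and measure in the computational basis. Conditioned on the other coordinates matching $x_0$ (equivalently $x$), the outcome statistics on bit $j$ reveal $\mathrm{Re}$ and $\mathrm{Im}$ of $\bar\alpha_{x_0}\alpha_x$. Doing this separately for each $x$ costs $\tilde O(k)$ settings and too many samples in total. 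So I will group the strings. I would try to choose a single random Clifford-style measurement, for example random CNOT layers that hash $\hat S$ into pairs $(x,x_0)$, or a binary-tree scheme that merges phases of subsets. Each sample then contributes phase information about one randomly chosen pair, weighted proportionally to $|\alpha_x||\alpha_{x_0}|$, and the total cost stays $\tilde O(k/\eps)$. An alternative I would also consider, which is cleaner to analyse, is to estimate $\ket{\psi}$ projected onto $\mathrm{span}\{\ket{x}:x\in\hat S\}$. This is a $k$-dimensional pure state tomography problem, and known single-copy pure-state tomography achieves $\tilde O(d/\eps)$ copies for dimension $d$. The work is to implement those random $k$-dimensional measurements efficiently, via a unitary that compresses $\hat S$ into $\lceil\log k\rceil$ qubits using $O(kn)$ gates built from a Gaussian-elimination-style CNOT circuit. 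Measuring after this compression gives the $\tilde O(k/\eps)$ copy bound directly. Its time cost is $\tilde O(kn)$ per classical processing step, amortized.

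\emph{Stage 3: assembly and error accounting.} Output $\ket{\hat\psi}=\sum_{x\in\hat S}\hat\alpha_x\ket{x}$, normalized. The infidelity decomposes into the missing mass ($\le\eps/4$) plus the error of the $k$-dimensional tomography ($\le\eps/2$ after rescaling $\eps$). Renormalization loss is handled by the projected state having norm at least $1-\eps/4$. A union bound over the failure probabilities gives success $1-\delta$, and setting $\delta=1/n$ only adds $\log n$ factors. The main obstacle is Stage 2: making the compression map $\hat S\to\{0,1\}^{\lceil\log k\rceil}$ reversible and implementable in $\tilde O(kn)$ time, and making the $d$-dimensional tomography's classical postprocessing run in $\tilde O(k/\eps)$ time rather than $\mathrm{poly}(k)$. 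My first attempt is to find a set of $O(\log k)$ coordinates distinguishing $\hat S$ and apply CNOTs to zero out dependence on the remaining coordinates, but linear compression of an arbitrary set generally fails. If it does, I would fall back to the pairwise-interference phase scheme with a sample-weighting analysis.
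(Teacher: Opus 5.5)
There is a genuine gap, and it sits at the nontrivial part of the theorem: the $\Tilde{O}(kn/\eps)$ running time.

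Your ``cleaner'' route compresses $\hat S$ into $\lceil\log k\rceil$ qubits and then runs $k$-dimensional pure-state tomography. This does give $\Tilde{O}(k/\eps)$ copies, but the compression unitary must be applied to every one of those copies. Suppose you grant an $O(kn)$-gate circuit for it; as you note yourself, it cannot be a CNOT circuit for general $S$. Gate cost per copy cannot be amortized across copies, so the total time is $\Tilde{O}(k^2n/\eps)$. This is exactly the baseline the paper mentions in its prior-work discussion.

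Your pairwise route also fails as stated. One measurement setting can interfere the reference $x_0$ with only one other string, so ``hashing $\hat S$ into pairs $(x,x_0)$'' really means one setting per $x$. Take a uniform-magnitude state: each such setting succeeds with probability $2/k$, and the pair state needs $\Theta(1/\eps)$ copies. So each string costs $\Theta(k/\eps)$ copies, and the total is $\Theta(k^2/\eps)$. The ``binary-tree'' alternative is not specified enough to assess.

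What is missing is the paper's hashing, diameter and bucketing layer, in three parts.

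(i) Apply a random linear hash $f(x)=Ax$ into $\lceil\log(2h)\rceil$ bits (circuit size $O(n\log k)$), measure the hash register, and keep only outcomes with exactly two preimages. Then project onto that pair and apply your CNOT-plus-single-qubit step. One setting now serves a random partial matching with $\Theta(h)$ expected edges instead of one prescribed pair. None of those edges needs to touch the reference.

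(ii) Because phases are no longer measured directly against $x_0$, you need a structural lemma. The union of $O(\log(h/\delta))$ independent such matchings on $h$ vertices, each edge present with probability $\Omega(1/h)$, has diameter $O(\log h)$. Summing edge estimates along BFS paths from $r$ therefore costs only a $\log^2$ factor in precision.

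(iii) For non-uniform $p$, split the retained support into dyadic probability levels $A_j$ and add the max-weight reference $r$ to each level. Learn each level's graph separately to precision $\xi_j\approx\sqrt{\eps/(L\hat\mu_jD_j^2)}$, using $\Tilde{O}(1/(t_j\xi_j^2))$ copies per hash. This keeps low-weight vertices off paths between high-weight ones, and makes level $j$ cost $\Tilde{O}(|A_j|/\eps)$ copies.

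Fidelity then follows from $\sum_x\sqrt{p_xq_x}\ge 1-\eps/4$ together with $\sum_x\sqrt{p_xq_x}\,(\hat\theta_x-\theta_x)^2\le\eps/4$. Your Stage 1 and your two-string interference primitive match what the paper uses.
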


The main technical contribution of this work is to obtain a near-linear scaling in the support size $k$ for the gate or \textit{time complexity} of our pure state tomography algorithm. Note that general pure state tomography takes $\Theta(2^n/\eps)$ samples and time, the latter up to log factors, for learning with infidelity $\eps$ \cite{o2016efficient,haah2016sample,grewal2026nearly}. However, these complexity bounds are independent of the support size $k$ of the state, and hold for worst case $k$ which is when the state has full support. When $k=2^n$, we recover the optimal worst-case sample and time complexity up to log factors; but for $k\ll2^n$, say $k=2^{n/2}$ or $k=n^2$, our algorithm performs exponentially better than these algorithms. 

We also consider $k$-sparse mixed states and define them analogously to pure states. A mixed state $\rho$ has sparsity $k$ iff the probability distribution obtained by measuring $\rho$ in the computational basis has support size at most $k$. Alternately, $\rho$ is $k$-sparse if there exists a set $S\subseteq\{0,1\}^n$ with $|S| \le k$ such that
\begin{equation}
\rho=\Pi_S\rho\Pi_S,
    \qquad
\Pi_S\coloneqq\sum_{x\in S}|x\rangle\langle x|.
\end{equation}
Note that this is distinct from the \textit{rank} $r$ of a mixed state, though we can always bound $r \le k$. So, $k$-sparse pure states are a special class of $k$-sparse mixed states. Additionally, diagonal mixed states with $k$ non-zero entries correspond directly to classical distributions over pure basis states, and are also $k$-sparse by this definition. In a practical setting, $k$-sparse pure states can often be subjected to noise within their support. This motivates the following theorem for learning $k$-sparse mixed states, the proof of which goes through the recently introduced random purification channel technique \cite{tang2025conjugate, girardi2025random}.

\begin{theorem}\label{theorem:mixed-main-theorem}
Given as input multiple copies of an unknown $n$-qubit $k$-sparse rank-$r$ mixed state $\rho$, the support size $k$, the rank $r$, and error parameters $\eps,\delta > 0$, there exists an algorithm (\cref{alg:k_sparse_mixed_tomography}) that outputs a classical description of an estimate $\hat{\rho}$ satisfying $F(\rho,\hat{\rho}) \footnote{Here $F(\rho,\hat{\rho}) \coloneqq \left(\Tr\sqrt{\sqrt{\rho}\,\hat{\rho}\,\sqrt{\rho}}\right)^2$ denotes the fidelity between mixed states.} \ge 1-\eps$ with probability at least $1-\delta$. The algorithm requires $m=\Tilde{O}(kr/\eps)$ samples and $\poly(m,n)$ time, when setting $\delta=1/n$, where the $\Tilde{O}$ suppresses logarithmic factors in $k$ and $1/\eps$.
\end{theorem}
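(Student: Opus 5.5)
We reduce \cref{theorem:mixed-main-theorem} to \cref{theorem:pure-main-theorem} using the random purification channel \cite{tang2025conjugate, girardi2025random}. Write $\rho=\sum_{i=1}^r \lambda_i\ket{v_i}\bra{v_i}$. Since $\rho=\Pi_S\rho\Pi_S$, each eigenvector $\ket{v_i}$ with $\lambda_i>0$ lies in $\mathrm{span}\{\ket{x}:x\in S\}$.

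The random purification channel maps $\rho^{\otimes m}$ to $\ket{\psi_\rho}\bra{\psi_\rho}^{\otimes m}$, where $\ket{\psi_\rho}=\sum_i\sqrt{\lambda_i}\ket{v_i}\ket{u_i}$ and $\{\ket{u_i}\}$ is a Haar-random orthonormal set in an $r$-dimensional purifying register. More precisely, it outputs the average of this over $U$. The channel is implementable, for instance via weak Schur sampling, in $\poly(m,n)$ time. We take the purifying register to be $\lceil\log_2 r\rceil$ qubits.

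The key step is to show that the purification is sparse. In the product computational basis $\{\ket{x}\ket{y}\}$, $\ket{\psi_\rho}$ is supported on $S\times\{0,1\}^{\lceil\log r\rceil}$, so it is $k'$-sparse with $k'\le 2kr$. This holds for every $U$, so the output is a convex mixture of $m$ copies of some $k'$-sparse pure state on $n'=n+O(\log r)$ qubits.

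Next, we run \cref{alg:k_sparse_pure_tomography} on these $m$ copies with sparsity $k'$ and accuracy $\eps$. Conditioned on $U$, the input is exactly $m$ i.i.d.\ copies of the fixed pure state $\ket{\psi_\rho}$. \cref{theorem:pure-main-theorem} therefore gives, with probability at least $1-\delta$, an estimate $\ket{\hat\psi}$ with $|\langle\psi_\rho|\hat\psi\rangle|^2\ge1-\eps$. Averaging over $U$ preserves this success probability. The number of copies needed is $m=\Tilde{O}(k'/\eps)=\Tilde{O}(kr/\eps)$.

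We output $\hat\rho=\Tr_2\ket{\hat\psi}\bra{\hat\psi}$. By monotonicity of fidelity under partial trace (the data processing inequality),
\[
F(\rho,\hat\rho)\ge |\langle\psi_\rho|\hat\psi\rangle|^2\ge1-\eps .
\]
The classical description of $\ket{\hat\psi}$ has $\Tilde{O}(kr)$ entries, so the partial trace costs $\poly(kr,n)$ time. The total running time is $\poly(m,n)$, dominated by implementing the purification channel.

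The main obstacle is the purification channel itself. We must confirm that its output is indeed a mixture of i.i.d.\ copies of purifications whose purifying register has dimension exactly $r$, rather than dimension $m$ or $2^n$, so that the sparsity stays $O(kr)$. This requires using the rank-$r$ version of the channel, or first projecting the purifier onto an $r$-dimensional subspace. Assuming the cited results provide this, the remaining steps are bookkeeping, including how the log factors in the new dimension $n'$ enter the bounds.
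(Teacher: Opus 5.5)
Your proposal is correct and follows the paper's proof essentially step for step. Like you, the paper applies the rank-$r$ random purification channel, whose purifier is $\mathbb{C}^r$ by the cited theorem, so the obstacle you flag is supplied by that result. It then observes that every purification lies in $\operatorname{span}\{\ket{x}:x\in S\}\otimes\mathbb{C}^r$ and is therefore $kr$-sparse, runs \cref{alg:k_sparse_pure_tomography}, and concludes via Uhlmann's theorem (equivalent to your monotonicity-under-partial-trace step) after averaging over the random purification.

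The one bookkeeping item you omit is that the channel is implemented only to within a small diamond-distance error. The paper handles this by running the pure-state algorithm with failure probability $\delta/2$ and charging the remaining $\delta/2$ to the channel approximation, at only polylogarithmic cost in $1/\delta$ in the running time.
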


A simple lower bound via a reduction to mixed state tomography fidelity lower bounds shows that $\Omega(kr/\eps)$ samples, and consequently $\Omega(knr/\eps)$ time in order to read the $n$-qubit inputs of each sample, is required to learn $k$-sparse rank-$r$ states to fidelity at least $ 1-\eps$. We defer this proof to \cref{lemma:lower-bound} in the Appendix. This shows that our pure state tomography algorithm is both sample and time optimal in parameters $n,k,1/\eps$, up to log factors, and our mixed state tomography algorithm is nearly sample optimal in the same parameters, and the rank $r$, but is likely not time optimal. Indeed, obtaining a time optimal mixed state tomography algorithm, even one which may not scale with sparsity, remains an important open problem.

We now discuss some implications of our results, and connections to different themes in the literature.

\paragraph{Polynomial time learning.} Consider the class of states which are $k$-sparse in some \emph{efficiently} computable basis, with $k=\poly(n)$. A trivial example is the state $\ket{+}^{\otimes n}$, which is $2^n$-sparse in the Z basis but $1$-sparse in the X basis. Our results directly imply a \emph{polynomial time} learning algorithm (and in particular, potentially linear time depending on $k$ and the basis) for states in this class given we know the sparse basis, by simply applying the basis change operation on all copies. This is true for both pure and mixed states, since the rank $r \le k$ always, so $k=\poly(n) \implies r=\poly(n)$. This extends the line of work on polynomial time learning algorithms for special classes of quantum states such as stabilizer states \cite{montanaro2017learning} or near stabilizer states \cite{grewal2023efficient,grewal2024improved}, states with bounded stabilizer rank or extent \cite{arunachalam2026learning,arunachalam2026tomography}, non-interacting fermion states \cite{aaronson2021efficient}, and matrix product states \cite{cramer2010efficient}. Moreover, for pure states, our algorithm only uses CNOT circuits of low depth with size at most $O(n\log k)$ followed by single qubit Pauli measurements, making it potentially feasible to run on near-term quantum computers for states with small $n,k$.

Various commonly studied states are sparse with support size at most $\poly(n)$. With respect to sparsity in the computational basis, states such as the $n$-qubit GHZ (2-sparse) and $W$ states ($n$-sparse) are widely studied. We can also consider sparse vector encoding states arising in quantum algorithms for linear algebra and machine learning \cite{saeedi2019quantum,prakash2014quantum}; and in quantum chemistry \cite{feniou2024sparse}. Efficient algorithms for sparse state preparation have been widely studied \cite{feniou2024sparse, malvetti2021quantum, zhang2022quantum, mao2024toward}, giving another clear motivation for optimal learning algorithms. Sparsity also occurs naturally in many-body systems due to conservation laws, such as fixed particle number or magnetization, which restrict states to exponentially smaller subspaces. For example, states with exactly $r$ particles have support size $\binom{n}{r}$, which is polynomial in $n$ for constant $r$, and include Dicke states and few-particle Fock states \cite{dicke1954coherence, bruus2004many}. However, the corresponding physical basis may not always be efficiently transformable to the computational basis.

\paragraph{Scaling with rank for mixed state tomography.} Essentially all state-of-the-art sample and time complexity bounds for learning mixed quantum states scale with the rank of the density matrix corresponding to the state \cite{o2016efficient,haah2016sample,brandao2020fast,pelecanos2025mixed}. We propose sparsity as the main analogous quantity of interest in the special case of pure states, while also being well-defined for mixed states. Along these lines, our work shows that for pure states, we obtain a smooth scaling with the support size of the state. Indeed, we can obtain tight bounds for time complexity in terms of the support size, which matches the sample complexity upto a necessary factor of $n$. This is contrary to the case of mixed states, where our best time complexity bounds are at least polynomially worse than sample complexity in terms of $r$ and $d$, and are moreover not known or expected to be optimal \cite{brandao2020fast,pelecanos2025mixed}. 

\subsection{Prior Work}
The closest prior work for learning $k-$sparse states comes from general mixed state and pure state tomography algorithms \cite{o2016efficient,haah2016sample,grewal2026nearly,van2023quantum,pelecanos2025mixed}. However, analysis of these algorithms do not directly imply any scaling with the sparsity of the state. Following prior work for pure state tomography, it \emph{is} possible get a simple $\Tilde{O}(k/\eps)$ sample and $\Tilde{O}(nk^2/\eps)$ time algorithm for pure states by first learning the support of the state through computational basis measurements, applying a permutation which maps the support of the state onto the first $k$ basis states, and then running the algorithm by $\cite{van2023quantum}$ on it; a careful analysis gives us the claimed complexity bounds. A structurally similar algorithm was also given in $\cite{grewal2025efficient}$ which uses $O((k+\log(n))/\eps)$ samples and $O(k^3+nk^2/\eps)$ time. But this still does not get the near-linear running time in $k$ that we want. Rather, as a direct application of our result, we improve the time complexity of the Pauli-sparse unitary channel tomography algorithm of \cite{grewal2025efficient}, up to a logarithmic loss in error parameter dependence. Obtaining optimal time complexity bounds has often been a more challenging problem than showing optimal sample complexity for learning problems \cite{aaronson2018shadow,huang2020predicting,huang2024certifying}, since the latter is a purely information theoretic claim; in our case the main challenge was to obtain linear scaling in $k$ for the time complexity. 

Indeed, the optimal time complexity bound for pure state tomography (independent of $k$) was only recently resolved in \cite{grewal2026nearly}. Unfortunately, their algorithm does not get optimal scaling in samples or time with the support size $k$ due to a subroutine of Frobenius distance estimation which involves sampling exponentially many Pauli expectation values. This subroutine is independent of the support of the state in the computational basis. We complement their result by recovering their time complexity bounds up to $\log$ factors via entirely different techniques that scale with $k$, at the cost of using some two-qubit gates, and thus obtain an alternative proof of the optimal time complexity of pure state tomography as well. 

\subsection{Phase States}
It will be useful to understand our algorithm for the special case of phase states. We motivate this here and give some evidence for why they would also be an interesting case to study independently. Phase states are a subclass of pure states which are of the form $\ket{\psi} = \frac{1}{\sqrt{2^n}}\sum_{x\in\{0,1\}^n} e^{i\theta_x}\ket{x}$, and have applications in quantum cryptography and state synthesis \cite{ji2018pseudorandom, irani2021quantum, rosenthal2024efficient}. Several of the main ideas and techniques used in the algorithm and proof of the main result for learning $k$-sparse pure states appear simply when we analyze the special case of learning \emph{$k$-sparse phase states} which are states of the following form.
\begin{equation}
\ket{\psi} = \frac{1}{\sqrt{k}}\sum_{\substack{x\in S\subseteq\{0,1\}^n \\ |S|=k}} e^{i\theta_x}\ket{x} \,,\quad \theta_x \in [0,2\pi)
\end{equation}
In particular, when $k=2^n$ we recover the usual definition of phase states. Sparse phase states can also be viewed as a generalization of the previously studied \emph{subset phase states} in the context of pseudorandom states \cite{aaronson2022quantum} and \textit{degree-$d$ sparse phase states} in quantum learning theory \cite{arunachalam2022optimal}, where additional constraints were imposed on the algebraic structure of the phases.

An implication of our result for the special case of sparse phase states is that it shows that the security of a particular pseudorandom quantum state, referred to as a \textit{subset phase state} in \cite{aaronson2022quantum}, is almost tight in the size of the subset $k$. The security proof of their construction shows computational pseudorandomness only for states with support size $k \ge O(t^2/\eps)$ to get $\eps$-security (via statistical distance to a Haar random state) against at most $t$-\textit{copy} adversaries. Complementing this, our learning algorithm implies that even the \textit{time} complexity of learning this state to $\eps$-accuracy is at most $O(k\mathrm{polylog}(k)/\eps)$, distinguishing it from random states within at most this time.

More generally, we also show that the class of $\poly(n)$-sparse phase states can be learned near-optimally in polynomial time, which was posed as an open question in \cite{arunachalam2022optimal} for learning degree-$d$ phase states, we resolve the related case of general phase states.

\subsection{Technical Overview}
For ease of understanding and presentation, we start by giving a brief sketch of the algorithm and analysis for the special case of phase states, and later extend it with a single additional key idea to obtain our main result on pure states. Our mixed state result follows consequently.

\paragraph{Learning Phase States} Given copies of a $k$-sparse phase state, we first learn the support $S$ of the state such that $\abs{S}=k$ by measuring the state $\approx k\log(k)$ times in the computational basis, by a coupon collector argument (it is not necessary to be given $k$ as an input, but here we assume it for simplicity). The remaining task is to learn the phases on this known support. At a high level, we do this by projecting on to independent pairwise superpositions between several \emph{pairs} of basis states. This is done by constructing a set of hash functions $f$, which are roughly 2-to-1 on $S$; we apply this in superposition to the state, and then measure.

More concretely, we apply an XOR oracle according to $f$ to the state (obtaining a superposition over $\ket{x}\ket{f(x)}, x\in S$) and measure the function register, getting some outcome $y$. If $y$ has exactly two pre-images in $S$, then the post-measurement state is a superposition of the form $\frac{1}{\sqrt{2}}(\ket{x_1}+e^{i({\theta_{x_2}-\theta_{x_1}})}\ket{x_2})$ up to a global phase. In particular, the relative phase between $\ket{x_1}$ and $\ket{x_2}$ is preserved. With enough copies of this state, we can obtain multiple copies of each specific two state superposition and learn their relative phase up to some desired precision.

Now, consider the graph $G=(S,E)$ where the vertices are indexed by basis states, and the edges indicate the relative phase between the basis states at its endpoints. We show that it suffices to learn enough relative phases (or edges in this graph) so there is a `short' path (of length $O(\log(k))$) from some reference state $r$ to all the other basis states. Each function $f$ defines a partial matching in this graph, which we learn by projecting onto the states corresponding to the edges. Then, by taking the union over several such matchings we obtain a sufficiently well-connected graph. This allows us to learn the phases between the reference state and all other states efficiently. At this point, it just remains to specify this function $f$. 

It turns out \emph{uniform random} functions have the desired properties, but are too expensive to implement in a circuit. We need to use more structured low-complexity functions to get a near time-optimal algorithm. Consequently, we make another natural choice and we prove in \cref{lemma:linear-connectivity} that it suffices to pick random linear functions over $\mathbb{F}_2$, which gets us near optimal circuit size. This allows us to prove the following theorem.

\begin{theorem}\label{theorem:phase-main-theorem-intro}
Given as input multiple copies of an unknown $n$-qubit $k$-sparse phase state $\ket{\psi}$, the support size $k$, and error parameters $\eps,\delta > 0$, there exists an algorithm (\cref{alg:k_sparse_phase_tomography}) that outputs a classical description of an estimate $\ket{\hat{\psi}}$ satisfying $|\bra{\psi}\hat{\psi}\rangle|^2 \ge 1 - \eps$ with probability at least $1-\delta$. The algorithm performs only single-copy measurements and requires $\Tilde{O}(k/\eps)$ samples and $\Tilde{O}(kn/\eps)$ time, when setting $\delta=1/n$, where the $\Tilde{O}$ suppresses logarithmic factors in $k$ and $n$.
\end{theorem}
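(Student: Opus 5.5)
We follow the three-stage outline of the technical overview: learn the support, learn relative phases along the edges of a random graph, then propagate phases along short paths.

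\emph{Stage 1: support learning.} We measure $O(k\log(k/\delta))$ copies in the computational basis. Every $x\in S$ has probability exactly $1/k$, so by coupon collector and a union bound all of $S$ is observed with probability $1-\delta/3$. We store $S$ in a sorted table, which costs $\Tilde{O}(kn)$ time.

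\emph{Stage 2: pairing.} Set $m=\lceil\log_2 k\rceil$ and draw a uniformly random linear map $A\in\mathbb{F}_2^{m\times n}$. To implement $\ket{x}\ket{0^m}\mapsto\ket{x}\ket{Ax}$, note that $Ax$ depends only on the restriction of $x$ to a small set of coordinates that distinguish the elements of $S$. We use a CNOT circuit of size $O(n m)$, or alternatively pick a random submatrix restricted to $O(\log k)$ distinguishing coordinates, which we find classically. We then measure the output register and get $y$.

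\begin{itemize}
\item If $|A^{-1}(y)\cap S|=2$ with preimages $\{x_1,x_2\}$, the residual state is $(\ket{x_1}+e^{i(\theta_{x_2}-\theta_{x_1})}\ket{x_2})/\sqrt2$.
\item We estimate this relative phase with single-qubit Pauli measurements. Apply CNOTs mapping $x_1\oplus x_2$ onto a single qubit $j$ (where $x_1,x_2$ differ), which leaves a one-qubit state on $j$ tensored with a fixed basis string. Then measure qubit $j$ in the $X$ and $Y$ bases.
\item $O(\log(1/\delta')/\eta^2)$ such measurements estimate the phase to additive error $\eta$.
\end{itemize}

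The key combinatorial lemma, which is \cref{lemma:linear-connectivity} as promised, is the following. Take $L=O(\log k)$ independent random linear maps, and define the graph whose edges are the pairs colliding uniquely (exactly two preimages). With high probability this graph has diameter $O(\log k)$ from a root $r$, or at least contains a BFS tree of depth $D=O(\log k)$. I would prove it via pairwise independence of $x\mapsto Ax$ on distinct points:
\begin{itemize}
\item Each $x\in S$ lands in a bucket of size exactly $2$ with constant probability, using second-moment and collision counting.
\item Expansion: for any set $T\subseteq S$ with $|T|\le k/2$, a constant fraction of $T$ gets matched outside $T$ with constant probability under one map.
\item Over $O(\log k)$ rounds, the reached set then grows geometrically.
\end{itemize}
I expect this step to be the \textbf{main obstacle}. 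Linear maps are only pairwise independent, so showing that matchings for adversarial $T$ (e.g.\ $S$ an affine subspace) still cross cuts needs care. Fallback: use random affine maps composed with a random $\mathbb{F}_2$-linear compression, and argue via the collision probability $2^{-m}$ for each fixed pair, plus a Chebyshev bound on per-vertex events. Connectivity only needs edges from frontier vertices, which per-vertex pairwise bounds give.

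\emph{Stage 3: phase propagation and error accounting.} Estimate each edge to additive error $\eta=\sqrt{\eps}/D$. Summing along a path of length $D$, each $\theta_x-\theta_r$ is then estimated within $\sqrt\eps$, and we set $\hat\theta_x$ accordingly. The fidelity satisfies
\[
|\langle\psi|\hat\psi\rangle|^2=\Bigl|\tfrac1k\sum_x e^{i(\theta_x-\hat\theta_x-c)}\Bigr|^2\ge \cos^2(\max_x|\Delta_x|)\ge1-\eps
\]
for the right global constant $c$. In fact the mean-squared error suffices, which allows $\eta^2\approx\eps/D$ by independence of edge errors. Either way we lose only polylog factors.

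\emph{Counting.} There are about $k$ edges, one per tree vertex. Each edge needs $\Tilde{O}(1/\eps)$ copies, and each copy yields a useful edge with constant probability. We keep re-drawing copies per map until every tree edge has enough samples; the collection is coupon-like over $\le k$ pairs per map, costing an extra $\log k$ factor. This gives $\Tilde{O}(k/\eps)$ total samples. Each copy uses an $O(n\log k)$-gate circuit plus $\Tilde{O}(n)$ classical work (hash lookup of $y$), so the total time is $\Tilde{O}(kn/\eps)$. A union bound over the $\le k$ edge estimates with $\delta'=\delta/(3k)$ and over the connectivity event finishes the proof.
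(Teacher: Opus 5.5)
Your route is the paper's: coupon-collector support learning, random $\mathbb{F}_2$-linear hashes giving partial matchings, the CNOT-plus-$X/Y$ two-state phase estimator, BFS propagation over a low-diameter image-pairs graph, and coupon-collector sample counting. The place where your plan does not go through as written is the key lemma, and the fix is what the paper does.

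\emph{The hash range is too small.} With $m=\lceil\log_2 k\rceil$ output bits, the range $R=2^m$ equals $k$ whenever $k$ is a power of two. Your second-moment bound is then $\Pr[N_x=1]\ge \frac{k-1}{R}\bigl(1-\frac{k-2}{R}\bigr)\approx 2/k$, not a constant. You need $R\ge 2k$, i.e.\ $\ell=\lceil\log_2(2k)\rceil$ as in the paper.

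\emph{The obstacle you flag is not real.} For distinct $u,v,s\in S$, the vectors $u\oplus v$ and $u\oplus s$ are distinct and nonzero, hence linearly independent over $\mathbb{F}_2$. So $A(u\oplus v)$ and $A(u\oplus s)$ are independent and uniform, which is exactly the three-point independence the collision events need. It gives $\Pr[As=Au\mid Au=Av]=1/R$. A union bound over $s$ then gives, for every fixed pair, $\Pr[(u,v)\text{ is an edge}]\ge \frac{1}{2R}>\frac{1}{8k}$.

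No adversarial-$T$ analysis is needed, because each new map is independent of the set $B$ reached so far. The expected number of new matching edges leaving $B$ is at least $|B|(k-|B|)/(8k)$. It is at most $|B|$ since the edges form a matching. Reverse Markov or Paley--Zygmund then gives constant-factor growth with constant probability whenever $|B|\le k/2$.

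You also have to get past $k/2$, which your ``grows geometrically'' step omits. The paper runs the growth from every vertex, union bounds over them, and uses that any two balls of size $>k/2$ intersect. Alternatively, the unreached set shrinks geometrically by the symmetric argument. Your affine-map fallback is unnecessary.

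Two smaller corrections. With $\delta=1/n$ and $k\ll n$, $O(\log k)$ maps only give failure probability $1/\poly(k)$. Take $O(\log(k/\delta))$ maps, or note that the graph is classically computable from $S$ and the $A_i$, so you can check its diameter and resample. Also, the side remark about hashing only $O(\log k)$ distinguishing coordinates is false: $Ax$ depends on all coordinates of $x$, and $S=\{e_1,\dots,e_k\}$ needs $k-1$ distinguishing coordinates. Your main $O(n\log k)$ CNOT implementation does not rely on it, so this does not affect the argument.
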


\paragraph{Learning Pure States} The same ideas from the phase state case also extend to arbitrary sparse pure states, where the only difference is that the probabilities in the $Z$ basis are no longer uniform. We can however learn the probability distribution in the $Z$ basis by simply using a generic classical distribution learning algorithm and generating samples through $Z$ measurements. The problem is that, once the probabilities are non-uniform, the phases of all the basis states should not be learned to the same accuracy, unlike phase states. A basis state with small probability mass should be learned less precisely, proportional to its contribution to the fidelity. 

On the other hand, if we build one global phase graph, then a low probability basis state might lie on a path between two high probability basis states, forcing us to learn an edge incident to a low probability state to a much higher accuracy than its own mass justifies. 

The key idea is that this can be handled by using the standard bucketing technique from classical distribution learning which says the following: It is possible to partition any discrete distribution into a small number of probability buckets, so that within each bucket all probabilities are approximately equal (up to a factor of 2 in our case). Thus, restricted to a single bucket, the state looks approximately like a phase state. We can then apply the same hashing construction with random linear functions as in the phase state algorithm, separately for each bucket. In order to make the phases learned in different levels be consistent with each other, we also add the same reference basis state $r$ to every level and learn all the phases relative to $r$. Summing the error over all buckets proportional to the probability mass of each gives the final fidelity error guarantee for arbitrary sparse pure states. It is worth noting that the sample and time complexity bounds for general sparse pure states match the case of sparse phase states, showing that the seemingly simpler case of learning sparse phase states is just as hard as the general case.

\paragraph{Learning Mixed States} We further extend our pure state learning algorithm to mixed states via the random purification channel construction introduced by \cite{tang2025conjugate}. In essence, their result shows that there exists an efficiently implementable quantum channel such that, given $m$ copies of a mixed state $\rho$, it outputs $m$ copies of a uniformly random purification of $\rho$. Given this channel, we show that $k$-sparse rank-$r$ mixed states get mapped to $kr$-sparse pure states. Then, we run our sparse pure state tomography algorithm on the purified samples, from which we can obtain a classical description of the mixed state to the desired accuracy, following the work of \cite{pelecanos2025mixed}. Our mixed state tomography algorithm is near sample optimal, but no longer time optimal since current constructions of the random purification channel rely on the Schur transform and are comparatively expensive to implement.

\paragraph{Robustness and input description.} Here we comment on a couple of important properties that our algorithms exhibit. Modern quantum systems are inherently noisy \cite{preskill2018quantum}, and it is often desirable for learning algorithms to be robust to small amounts of noise in a quantum system \cite{cotler2026noisy, hu2025demonstration}. We note that our algorithms are robust to small noise (say if our states are acted upon by a depolarizing noise channel with small probability) -- in particular we obtain the same sample complexity guarantees if the input state is approximately $k$-sparse i.e. if at least $1-\eps$ of the probability mass is on at most $k$ basis states, as opposed to exactly $k$-sparse where the probability mass needs to exactly 0 outside of the $k$-size support. This follows simply from the fact that the classical distribution learning algorithm we use in the first step of (say) the pure state case is a \textit{learned} support in any case, it does necessarily need to know the exact support in order to learn the state to the desired precision. A similar argument works for the mixed state (and phase state) case as well, by reduction to the pure state case. 

Moreover, our algorithms need not necessarily be given the sparsity parameter ( i.e. the support size) $k$ as input either. This might be useful if we know that our states are noisy or sparse, but we don't have a good upper bound on $k$. For the phase state case, this is particularly clear via the coupon collector argument, since we can simply choose to stop when we haven't seen a new coupon for at most double the number of draws we needed for the last coupon, with additional log factors in  $1/\delta$ with respect to our desired failure tolerance $\delta$. More generally, in the pure state (and consequently mixed state) case, we again appeal to classical distribution learning algorithms which have this property \cite{valiant2016instance,mazzetto2024improved}, and our previous argument similarly follows since we just need to learn the phases on the learned support, to obtain our fidelity guarantee.

Throughout the paper, we assume $k$ is given as input to the algorithm, and that the state is exactly $k$-sparse, for the sake of conciseness. However, as we have described here, this assumption is not necessary and all our results would remain the same with minor modifications.

\subsection{Open Questions}
As mentioned previously, the main open question from our work is to obtain optimal time complexity bounds for mixed states. Broadly, there exist very few time optimal (or nearly time optimal) bounds for tomography problems in the literature. We consider a few concrete variants which are open.

\question{Does there exist an algorithm for learning $n$-qubit rank-$r$ mixed states in $O(rn2^n/\eps)$ time to infidelity at most $\eps$?}

A concrete route to proving this, as already observed in \cite{pelecanos2025mixed}, is to improve the time complexity of the random purification channel. This also gives a route to answering the further question: Does there exist an algorithm for learning $n$-qubit $k$-sparse rank-$r$ mixed states in $O(nkr/\eps)$ time to infidelity at most $\eps$? We show in \cref{section:mixed-state} that the sample complexity scales with the support size $k$, could we additionally get the time complexity of the random purification to scale optimally with $k$ (or $r$)?

When $r=1$, which is when the input state is pure, our work resolves the prior question up to log factors, scaling with $k$. Optimizing the log factor dependence when $r=1$ is also an independent interesting question, since it seems it would require fundamentally different techniques as compared to our algorithm to remove the log factors entirely, or show a tight matching lower bound.

\question{Does there exist an algorithm for learning $n$-qubit $k$-sparse pure states in $\Tilde{O}(kn/\eps)$ time to infidelity at most $\eps$, using only single qubit Pauli measurements?}
The work of \cite{grewal2026nearly} answers this question in the case of $k=2^n$, by obtaining $\Tilde{O}(2^n/\eps)$ time complexity scaling for general pure states. We conjecture that some log factor dependence may be necessary in this case. The generalization of the above question to arbitrary mixed states has also been studied by \cite{acharya2025pauli1,acharya2025pauli2}, for both $k,r=2^n$. Obtaining (near) optimal time complexity bounds, along with scaling in $k$, remains open.

\question{Does there exist an algorithm for learning states which are sparse in a fixed (possibly unknown) Fock basis which is sample optimal and time efficient?}
The work of \cite{chen2026optimal} recently obtained optimal sample complexity bounds for learning general $m$-mode bosonic and fermionic Gaussian states. However, their algorithm is not known to be efficiently implementable. There has also been work that obtains algorithms which are sample and time efficient but not sample optimal \cite{bittel2025optimal,iosue2025higher}. We then ask the further question -- if a state is promised to be $k$-sparse in some unknown Gaussian rotated Fock basis, can we learn it with $\poly(m,k,1/\eps)$ samples and time for fermions, possibly with an energy constraint dependence for bosons?

\section{The Phase State Algorithm}

As promised, we first give a complete description and proof of the phase state algorithm for intuitive clarity; and since we will reuse several of the key ideas and proofs in the general pure state case. In order to show the correctness of this algorithm and analyze it, we first need to prove some lemmas we will use.

\begin{algorithm}
\caption{$k$-Sparse Phase State Tomography}
\label{alg:k_sparse_phase_tomography}
\begin{algorithmic}[1]
\State \textbf{Input:} Copies of an $n$-qubit $k$-sparse phase state $\ket{\psi}=\frac{1}{\sqrt{k}}\sum_{x\in S} e^{i\theta_x}\ket{x}$, the support size $|S|=k$, and error parameters $\eps,\delta > 0$
\State \textbf{Output:} An estimate $\ket{\hat{\psi}}=\frac{1}{\sqrt{k}}\sum_{x\in S} e^{i\hat{\theta}_x}\ket{x}$, with phases recovered up to a global phase

\State Using $\Theta(k\log (k/\delta))$ copies of $\ket{\psi}$, learn the support $S$ in the computational basis.
\State Let $\mathcal{F}=\{f_1,\dots,f_m\}$ be the set of functions from \cref{lemma:linear-connectivity}, where $m=O(\log(k/\delta))$, and each $f_i:\{0,1\}^n\to\{0,1\}^{\ell}$, with $\ell=\lceil \log(2k)\rceil$.
\State Initialize an empty graph $G=(S,E)$ and an empty table $\Delta$ of relative phase estimates.
\For{$i=1,\dots,m$}
    \State Take $\Theta(k\log^2(k)\log(k/\delta)/\eps)$ fresh copies of $\ket{\psi}$.
    \For{each copy}
        \State Append an ancilla register initialized to $\ket{0^{\ell}}$.
        \State Apply the XOR oracle $U_{f_i}:\ket{x}\ket{0^{\ell}}\mapsto\ket{x}\ket{f_i(x)}$.
        \State Measure the ancilla register, obtaining some outcome $y$, and place the remaining post-measurement state in some abstract indexed bucket $(i,y)$.
    \EndFor
    \For{each $y$ such that $|\{x\in S:f_i(x)=y\}|=2$}
        \State Search over table of all $x$ to find the two pre-images of $y$. 
        \State Let $\{x_1,x_2\}=\{x\in S:f_i(x)=y\}$.
        \State Use \cref{lemma:two_tom} on bucket $(i,y)$ with $(x_1,x_2)$ to estimate $\Delta(x_1,x_2)=\theta_{x_2}-\theta_{x_1}$.
        \State Add the edge $\{x_1,x_2\}$ to $E$ and set $\Delta(x_2,x_1):=-\Delta(x_1,x_2)$.
    \EndFor
\EndFor
\State Let $r\in S$ be the lexicographically first element of $S$, and set $\hat{\theta}_r=0$.
\State Carry out a BFS on $G$ starting from the vertex $r$, and do as follows for each visited vertex
\For{each $x\in S\setminus\{r\}$}
    \State Consider the path $r=v_0,v_1,\dots,v_t=x$ in $G$, which by \cref{lemma:linear-connectivity} has length at most $t=O(\log(k))$.
    \State Set $\hat{\theta}_x:=\sum_{j=0}^{t-1}\Delta(v_j,v_{j+1})$.
\EndFor
\State \Return $\ket{\hat{\psi}}=\frac{1}{\sqrt{k}}\sum_{x\in S} e^{i\hat{\theta}_x}\ket{x}$.
\end{algorithmic}
\end{algorithm}

The following lemma will be useful for the pure state case as well, so we state and prove it more generally than needed for learning phase states.

\begin{lemma}[Estimating phase difference between a superposition of two basis states]\label{lemma:two_tom}
Consider the quantum state
\[
\ket{\psi} \coloneq a\ket{x} + be^{i\phi}\ket{y},
\qquad x,y \in \{0,1\}^n,\ x\ne y, \,\, a, b \in \mathbb{R}.
\]
Given as input the two bitstrings $(x,y)$ and $O\left(\log(1/\delta)/a^2b^2\varepsilon\right)$ copies of $\ket{\psi}$, with probability at least $1-\delta$, \cref{alg:rel_phase_est} returns an estimate $\hat{\phi}$ such that $\abs{\phi-\hat{\phi}} \le \sqrt{\varepsilon}$.
\end{lemma}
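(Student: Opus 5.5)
The approach is to reduce the problem to estimating two single-qubit Pauli expectation values on an effective qubit spanned by $\ket{x}$ and $\ket{y}$. Since $x \ne y$, pick an index $j$ with $x_j \ne y_j$. Then apply a CNOT circuit that maps $\ket{x}\mapsto\ket{x'}$ and $\ket{y}\mapsto\ket{y'}$, where $x'$ and $y'$ agree outside coordinate $j$. Concretely, for every $i\neq j$ with $x_i\neq y_i$, apply a CNOT controlled on qubit $j$ targeting qubit $i$, and if needed an $X$ on qubit $j$. The state then factors as $\ket{z}\otimes(a\ket{0}+be^{i\phi}\ket{1})$ on qubit $j$, up to relabeling. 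This circuit has size $O(n)$, is computed classically from the given pair $(x,y)$, and preserves the relative phase. Measuring qubit $j$ in the $X$ basis gives $\langle X\rangle = 2ab\cos\phi$, and measuring in the $Y$ basis gives $\langle Y\rangle = 2ab\sin\phi$. We normalize so that $a^2+b^2=1$, and the sign conventions on $a,b$ are absorbed into $\phi$.

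Split the copies in half, use one half for $X$ and the other for $Y$, and form the empirical means $\hat{c}\approx 2ab\cos\phi$ and $\hat{s}\approx 2ab\sin\phi$. Each is an average of $\pm1$ outcomes, so by Hoeffding, $N=O(\log(1/\delta)/\eta^2)$ copies give $|\hat c-2ab\cos\phi|,|\hat s-2ab\sin\phi|\le\eta$ with probability $1-\delta$. Output $\hat\phi=\operatorname{atan2}(\hat s,\hat c)$.

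The key step is the error propagation. The vector $v=(2ab\cos\phi,2ab\sin\phi)$ has norm $2ab$, and our estimate $\hat v$ is within $\sqrt2\,\eta$ of it. The angle between two vectors with $\|v-\hat v\|\le t\le \|v\|/2$ is at most $O(t/\|v\|)$; this follows from the sine law or by bounding $\arcsin(t/\|v\|)$. So $|\phi-\hat\phi|\le O(\eta/(ab))$, taken modulo $2\pi$. Setting $\eta = c\,ab\sqrt{\eps}$ for a small constant $c$ gives $|\phi-\hat\phi|\le\sqrt\eps$. This requires $N = O(\log(1/\delta)/(a^2b^2\eps))$, matching the claim. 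The condition $t\le\|v\|/2$ holds automatically for $\eps\le 1$; for larger $\eps$ the claim is trivial up to constants.

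The main obstacle is this angle-perturbation step. The care needed is in getting the $1/(ab)$ scaling rather than something worse, and in handling angles modulo $2\pi$ (interpreting $|\phi-\hat\phi|$ as circular distance). I would prove it using the elementary bound $\sin\angle(v,\hat v)\le \|v-\hat v\|/\|v\|$, which applies whenever $\|v-\hat v\|<\|v\|$, together with $\theta\le \frac{\pi}{2}\sin\theta$ for $\theta\le\pi/2$.
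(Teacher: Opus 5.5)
Your proposal is correct and follows essentially the same route as the paper: CNOTs from a differing coordinate $j$ onto the other differing coordinates reduce the problem to the qubit $a\ket{0}+be^{i\phi}\ket{1}$, Hoeffding bounds estimate $\langle X\rangle = 2ab\cos\phi$ and $\langle Y\rangle = 2ab\sin\phi$ to additive error $O(ab\sqrt{\eps})$, and the output is the angle of the estimated vector. The differences are only in detail: you apply an explicit $X$ gate on qubit $j$ where the paper corrects the sign classically, and your $\sin\angle(v,\hat v)\le\|v-\hat v\|/\|v\|$ argument spells out the angle-perturbation step that the paper merely asserts.
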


\begin{proof}
\cref{alg:rel_phase_est} is shown to be correct as follows.

\begin{algorithm} 
\caption{Estimating the relative phase $\phi$}
\label{alg:rel_phase_est}
\begin{algorithmic}[1]
\State \textbf{Input:} Bitstrings $x,y \in \{0,1\}^n$ with $x \ne y$, access to copies of $\ket{\psi}=a\ket{x}+be^{i\phi}\ket{y}$, and error parameters $\varepsilon,\delta > 0$

\State \textbf{Output:} An estimate $\hat{\phi}$ satisfying $\abs{\phi-\hat{\phi}} \le \sqrt{\varepsilon}$ with probability at least $1-\delta$

\State Let $R \gets \{ i \in [n] : x_i \ne y_i \}$
\State Let $j \gets \min R$
\For{each fresh copy of $\ket{\psi}$}
    \State Apply, for every $i \in R \setminus \{j\}$, a CNOT with control qubit $j$ and target qubit $i$
    \State Measure every qubit except qubit $j$
    \State Use the resulting one-qubit sample on qubit $j$ as input to a single-qubit tomography procedure by calculating Pauli $X,Y$ expectation values on it
\EndFor
\State After tomography on these samples, return the phase estimate $\hat{\phi}$ 
\end{algorithmic}
\end{algorithm}

Let $R$ be the set of indices $i$ such that $x_i \ne y_i$, which can be found in $O(n)$ time given $x,y$.
Let $j$ be the lexicographically first index in $R$.
Fix one copy of $\ket{\psi}$.
Apply a CNOT from qubit $j$ to each qubit indexed by $R \setminus \{j\}$. We then get the state
\[
\ket{\psi_j} = a\ket{r_1,...,x_j,...r_n}+be^{i\phi}\ket{r_1,...,\overline{x_j},...r_n}
\]
(for some $r_i \in \{0,1\}$) and then measure all qubits except $j$.

Since $x$ and $y$ differ exactly on the coordinates in $R$, after applying these CNOTs the two basis states agree on every qubit other than $j$.
Therefore, after measuring all qubits except $j$, the remaining qubit is in the state
\[
\ket{\psi_j'}=a\ket{0}+be^{i\phi}\ket{1},
\]
up to a global phase depending on the value of $x_j$, from which we can decide the sign. Thus, we have reduced to estimating the relative phase of the single-qubit state $\ket{\psi_j'}$. Our claim follows from using a slightly modified single-qubit tomography algorithm on multiple copies of the state to get an estimate $\hat{\phi}$ of the relative phase to the desired accuracy. In particular, for the qubit $\ket{\psi_j'}$ , taking expectation values over Pauli $X$ and $Y$ measurements satisfy
\[
    \bra{\psi_j'}X\ket{\psi_j'} = 2ab\cos(\phi),
    \qquad
    \bra{\psi_j'}Y\ket{\psi_j'} = 2ab\sin(\phi).
\]
Estimating both expectations to additive error $O(ab\sqrt{\eps})$ and returning the angle of the resulting vector gives relative phase error at most $\sqrt{\eps}$, and we get concentration from Hoeffding's inequality for the stated number of copies.
\end{proof}

\noindent In order to state the next lemma, we first define the following object.

\begin{definition}[Image-pairs-graph]
Given a set $S \subseteq \{0,1\}^n$, and a set of functions $\mathcal{F}=\{f_1,\dots,f_t\}$ such that $f_i:\{0,1\}^n\to\{0,1\}^{l}$, we define the image-pairs-graph $G_{S,\mathcal{F}} = (S,E)$ as the graph with vertices labeled by elements in $S$ where there is an edge between any two distinct vertices $u,v \in S$ iff they are the only two elements which hash to the same value for some function $f_i$; in other words if the following is true.
\[
\exists f_i: (f_i(u)=f_i(v)) \land \abs{\{s \in S: f_i(s)=f_i(u)\}}=2 \,.
\]
\end{definition}
As described in the algorithm, this graph also represents the relative phases learned corresponding to the state with support $S$. There is an edge $(u,v)$ in this graph iff we can learn the relative phase between basis state $\ket{u},\ket{v}$ such that $u,v \in S$. Now, we can describe the set of functions we need.

\begin{lemma}[Diameter bound from random linear functions]\label{lemma:linear-connectivity}
Given a set $S\subseteq \{0,1\}^n$ with $\abs{S}=k\ge 2$, there exists a set of $m=O(\log(k/\delta))$ functions $\mathcal F=\{f_1,\dots,f_m\}$ over some choice of randomness, where
\[
f_i:\{0,1\}^n\to \{0,1\}^{\ell}, \qquad
\ell=\lceil \log(2k)\rceil,
\]
such that each $f_i$ is of the form $f_i(x)=A_i x$ over
$\mathbb F_2$, and the image-pairs graph $G_{S,\mathcal F}$ has diameter
$O(\log k)$ with probability at least $1-\delta$. Moreover, each $f_i$ can be implemented by a circuit with depth $O(\log(n))$ and size $O(n\log(k))$.
\end{lemma}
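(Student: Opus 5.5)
The plan is to take each $f_i(x)=A_ix$ with $A_i\in\mathbb F_2^{\ell\times n}$ uniformly random and the $A_i$ independent. First I will record the collision statistics of a single random linear map. Whether $Ax=Ay$ depends only on $A(x\oplus y)$. Any set of distinct nonzero vectors $d_1,d_2$ over $\mathbb F_2$ is linearly independent, and $A$ applied to linearly independent vectors gives independent uniform outputs. Hence for distinct $u,v,w\in S$,
\[
\Pr[Au=Av]=2^{-\ell},\qquad \Pr[Au=Av=Aw]=2^{-2\ell},
\]
exactly as for a truly random function. Using $2k\le 2^{\ell}<4k$, I then bound the probability that a fixed $u\in S$ lies in a size-two fibre:
\[
\Pr[\{u,v\}\text{ is a fibre}] \;\ge\; 2^{-\ell}-(k-2)\,2^{-2\ell}.
\]
Summing over $v$ (the events are disjoint) gives $\Pr[u\text{ is matched by }f_i]\ge c$ for an absolute constant $c>0$, roughly $1/4-1/16$. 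The same computation shows that for a fixed set $T\subseteq S$ with $|T|\le k/2$, the expected number of $u\in T$ matched by $f_i$ to a partner outside $T$ is at least $c'|T|$.

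Second, I will turn this into a diameter bound through vertex expansion of the union graph $G_{S,\mathcal F}$. The target statement is: with probability $1-\delta$, every $T\subseteq S$ with $|T|\le k/2$ has $|N(T)\setminus T|\ge \alpha|T|$ for a constant $\alpha>0$. Given this, the BFS ball around any vertex grows by a factor $(1+\alpha)$ per step until it exceeds size $k/2$, which happens within $O(\log k)$ steps. Any two balls of size greater than $k/2$ intersect, so the diameter is $O(\log k)$.

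To prove expansion I will union-bound over $T$. The number of sets of size $t$ is at most $\binom{k}{t}\le e^{t\log k}$, so I need $\Pr[T\text{ fails}]\le e^{-2t\log k}$, up to the $\delta$ factor. Since the $m=C\log(k/\delta)$ maps are independent, it would suffice that a single $f_i$ produces at least $\alpha|T|$ outside-partners for $T$ except with probability $e^{-\Omega(t)}$. With $m$ independent maps this gives failure probability $e^{-\Omega(tm)}$, which beats the union bound once $C$ is large.

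The main obstacle is exactly this per-map concentration. The indicator events "$u$ is matched outside $T$", for $u\in T$, are not independent under a single linear map; they only have limited-wise independence. Moreover, sets $T$ that are affine subspaces (or unions of cosets) are adversarial for linear hashing. What I would try first is a moment bound. Collision events for $s$ distinct elements are governed by the rank of their difference set, so the indicators are nearly $r$-wise independent whenever the relevant differences are linearly independent. From this I would bound the $r$-th central moment with $r\approx t$ and apply Markov to it. If that is too weak for structured $T$, the fallback is a weaker expansion, say $|N(T)\setminus T|\ge\alpha|T|/\log k$, which still gives polylogarithmic diameter. Alternatively I would aggregate across the $m$ maps rather than demanding success from each one, using a Chernoff bound over $i$ on the events "$f_i$ gives $T$ at least $\alpha|T|$ boundary edges", each of which holds with constant probability by a Paley–Zygmund/second-moment argument.

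Finally, for the circuit claim, each output bit $(A_ix)_j$ is the parity of at most $n$ input bits. I will compute it into ancilla $j$ with CNOTs arranged as a binary XOR tree using scratch qubits, and uncompute the scratch afterwards. Each output bit then costs $O(n)$ gates and $O(\log n)$ depth. The $\ell=O(\log k)$ outputs can be run in parallel, either by fanning out copies of the inputs or by scheduling, which gives size $O(n\log k)$ and depth $O(\log n)$.
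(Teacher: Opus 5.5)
Your collision statistics and circuit count match the paper's proof, which gets $\Pr[(u,v)\in G_i]\ge 1/(2R)>1/(8k)$ by the same computation. The diameter half has a genuine gap. Proving that \emph{every} $T$ with $|T|\le k/2$ expands, via a union bound over all $\binom{k}{t}$ sets, cannot work for linear hashes, and the obstruction is worse than the structured-$T$ issue you flag. Take $S$ to be a $d$-dimensional linear subspace, so $k=2^d$ and $\ell=d+1$. Then $A_i|_S$ is a uniformly random linear map $\mathbb F_2^d\to\mathbb F_2^{d+1}$. It is injective with probability $\prod_{j=2}^{d+1}(1-2^{-j})>1/2$, and on that event $f_i$ creates no edges at all, so every $T$ fails for $f_i$ simultaneously. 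Hence no per-map bound of the form $e^{-\Omega(t)}$ exists, whether by moments or otherwise. For each fixed $T$, the probability that all $m$ maps fail is at least $2^{-O(m)}=(\delta/k)^{O(1)}$, which is far too large to survive a union bound over $\binom{k}{t}$ sets for large $t$. Your fallbacks hit the same wall. Weakening $\alpha$ to $\alpha/\log k$ changes nothing on the injective event, and would only give polylogarithmic diameter anyway. A Chernoff bound over $i$ for a fixed $T$ still leaves failure probability $2^{-O(m)}$ per set.

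The missing idea is to stop asking for expansion of all sets, and instead expose the maps one at a time while growing a single ball.
\begin{itemize}
\item Fix $u$ and put $B_0=\{u\}$. At step $r+1$, let $X_{r+1}=|N_{G_{r+1}}(B_r)\setminus B_r|$.
\item $B_r$ is determined by $A_1,\dots,A_r$, and $A_{r+1}$ is fresh. So your expectation computation gives $\mathbb E[X_{r+1}\mid A_{\le r}]\ge c'|B_r|$ whenever $|B_r|\le k/2$.
\item Since $G_{r+1}$ is a matching, $X_{r+1}\le |B_r|$. Paley--Zygmund then gives $X_{r+1}\ge c'|B_r|/2$ with conditional probability at least $c'/4$.
\item Add the new neighbours only on such successful steps. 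Every vertex in the ball is then reached by a path with one edge per success, and $O(\log k)$ successes push $|B_r|$ past $k/2$.
\item The number of successes dominates a $\mathrm{Bin}(m,c'/4)$ variable. With $m=O(\log(k/\delta))$, the ball passes $k/2$ except with probability $\delta/k$.
\item You now union bound over only the $k$ start vertices. Any two balls of size $>k/2$ intersect, which gives diameter $O(\log k)$.
\end{itemize}
This is essentially the argument of \cref{lemma:random-matching-radius}. It needs only the pairwise edge probability you already computed, the matching property of each $G_i$, and independence across the $f_i$.
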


\begin{proof}
The proof idea is as follows. First, we choose an $f_i$ which is a random linear map (or drawn from a family of 3-universal hash functions), with the size of image a constant multiple of $k$. Since this is a universal family of hash functions, any fixed pair $u,v\in S$ collides with probability $\Theta(1/k)$. We then show that conditioned on this collision, with constant probability no third element of $S$ collides with them. This creates a random partial matching on $G_{S,\mathcal F}$. Why does this suffice? For some intuition, it is known that the union of three random matchings in a $k$-vertex graph is a vertex-expander with high probability, and thus has diameter $O(\log(|S|))$ \cite{goyal2009expanders}. Although, the union of partial matchings do not seem to satisfy expansion properties in general, we show in \cref{lemma:random-matching-radius} that with $O(\log(k))$ partial matchings, we can obtain a similar guarantee on the diameter of $G_{S,\mathcal F}$ with constant probability. 

We now show that each such $f_i$ defines a random partial matching. Let $\ell = \lceil \log(2k)\rceil$, and $R=2^\ell$, so
$2k\le R<4k$. Choose $m$ independent uniformly random matrices
$A_i\in \mathbb F_2^{\ell\times n}$, $1\le i\le m$, and let $f_i(x)=A_i x$. For each
$i$, let $G_i$ be the graph on $S$ whose edges correspond exactly to the
image-pair edges created by $f_i$ i.e. $(u,v) \in G_i \iff f_i(u)=f_i(v) \land \abs{\{s \in S: f_i(s) = f_i(u)\}}=2$. Then,
$G_{S,\mathcal F}=G_1\cup\cdots\cup G_m$. 

Now, we analyze the probability of a single edge $(u,v)$ being added to $G_i$. Consider distinct $u,v\in S$, and let $d=u+v\ne 0$. For a randomly chosen linear map $A$ \footnote{we note that the proof also carries through if $f$ is chosen from a 3-universal family of hash functions, which is sufficient in particular where we use conditional independence in equation (3)},
\begin{equation}
\Pr[A u=A v]=\Pr[A d=0]=1/R \,.    
\end{equation}
Consider an arbitrary $s\in S\setminus\{u,v\}$ and let $e=s+u$. Since this implies $e\notin \{0,d\}$,
the vectors $e$ and $d$ are linearly independent over $\mathbb F_2$.
Thus, for uniformly random $A$, the random variables $Ad$ and $Ae$ are
independent and uniformly distributed in $\mathbb F_2^\ell$. Consequently, $\Pr[As=Au\mid Au=Av]=\Pr[Ae=0\mid Ad=0]=1/R$. Hence, using a union bound over all such $s$,
\begin{equation}
\Pr[\exists s\in S\setminus\{u,v\}: A s=A u\mid A u=A v] \le (k-2)/R \le 1/2 \,.   
\end{equation}
since $R \ge 2k$. Thus, by Bayes' Theorem combined with equations (4) and (5), and using $R \le 4k$,
\begin{equation}
\Pr[(u,v) \in G_i] = \Pr[(Au = Av)] \cdot (1- \Pr[\exists s\in S\setminus\{u,v\}: A s=A u\mid A u=A v]) \ge \frac{1}{R} \cdot \frac{1}{2} > \frac{1}{8k} \,.   
\end{equation}
By taking a union of $m=O(\log(k/\delta))$ independent matchings in $G_{S,\mathcal F}$, where each edge is included independently with probability $> 1/8k$, applying \cref{lemma:random-matching-radius}, with $c=1/8$, shows that the diameter of $G_{S,\mathcal F}$ is bounded by $O(\log(k))$ with probability at least $1-\delta$. 

Finally, we note that since each $f_i$ is a random linear map over $\mathbb{F}_2$, the quantum circuit which implements it is just implementing a random parity or XORs via a tree of CNOTs over $n$ bits for $l$ times, so the circuit is of size $O(n\ell)=O(n\log(k))$ and depth $O(\log(n))$. 
\end{proof}

\begin{lemma}[Diameter bound from random partial matchings]\label{lemma:random-matching-radius}
Consider a vertex set $V$ with $k\ge 2$ vertices, and let $G_1,\ldots,G_m$ be
independent random matchings on $V$. Let $G=\bigcup_i G_i$. Let us assume that each edge per matching is uniform randomly chosen with probability at least $p \ge c/k$ for some constant $0<c<1$, i.e.,
\[\exists 0<c<1: \forall i: \forall u,v \in V: \Pr[(u,v)\in E(G_i)]\ge c/k \,,\]
where $E(G_i)$ is the edge set of $G_i$. If $m=\Omega(\log(k/\delta))$,  then with probability at least $1-\delta$, the graph
$G$ has diameter at most $O(\log k)$.
\end{lemma}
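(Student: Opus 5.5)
The plan is to run a ``lazy'' breadth-first growth argument from each vertex, revealing the matchings one at a time. Independence of $G_1,\ldots,G_m$ will then supply fresh randomness at every step. Fix a start vertex $u\in V$ and define sets $T_0=\{u\}\subseteq T_1\subseteq\cdots$ together with a counter $d_j$, as follows. Given $T_{j-1}$, reveal $G_j$ and let $N_j$ be the set of vertices outside $T_{j-1}$ that are matched in $G_j$ to a vertex of $T_{j-1}$.
\begin{itemize}
\item If $|N_j|\ge \tfrac{c}{4}|T_{j-1}|$, call step $j$ a \emph{success}: set $T_j=T_{j-1}\cup N_j$ and $d_j=d_{j-1}+1$.
\item Otherwise set $T_j=T_{j-1}$ and $d_j=d_{j-1}$.
\end{itemize}
By induction every vertex of $T_j$ lies within distance $d_j$ of $u$ in $G$. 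So it suffices to show that with probability at least $1-\delta/k$ we have $|T_m|>k/2$ after only $d_m=O(\log k)$ successes. Once every vertex has a ball of radius $O(\log k)$ containing more than $k/2$ vertices, any two such balls intersect, and the diameter is $O(\log k)$. A union bound over the $k$ start vertices then gives total failure probability $\delta$.

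The key step is a one-step growth bound. Since $T_{j-1}$ is determined by $G_1,\ldots,G_{j-1}$, it is independent of $G_j$, so the edge-probability hypothesis still applies conditionally. Suppose $|T_{j-1}|=t\le k/2$. In a matching each vertex of $T_{j-1}$ has at most one partner, so the events ``$t'$ is matched to $w$'' are disjoint over $w$. Hence
\[
\mathbb E\bigl[|N_j|\bigr]\;\ge\;\sum_{t'\in T_{j-1}}\sum_{w\notin T_{j-1}}\frac{c}{k}\;=\;\frac{c\,t(k-t)}{k}\;\ge\;\frac{c\,t}{2}.
\]
Again because $G_j$ is a matching, $|N_j|\le t$. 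A reverse Markov inequality on $|N_j|/t\in[0,1]$ then gives $\Pr[|N_j|\ge ct/4]\ge c/4$. So each step, conditioned on the past, is a success with probability at least $c/4$ as long as $|T_{j-1}|\le k/2$.

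Each success multiplies $|T|$ by at least $1+c/4$. Hence after $L=\lceil\log_{1+c/4}(k/2)\rceil+1=O(\log k)$ successes we have $|T|>k/2$; from then on we stop counting and freeze $T$. The success indicators dominate independent Bernoulli$(c/4)$ variables, so by a Chernoff bound over $m=C\log(k/\delta)$ steps, with $C$ a large constant depending on $c$, the number of successes is below $L$ with probability at most $e^{-\Omega(m)}\le\delta/k$. The resulting radius bound is $d_m\le L=O(\log k)$, independent of $\delta$. This is exactly why the lazy rule is used: steps that add only a few vertices are simply skipped, so they never inflate the distance bound.

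I expect the main obstacle to be the dependence structure. Edges within a single matching are not independent, and the balls are built adaptively. The plan resolves this in two ways: it uses only the per-pair marginal bound together with disjointness inside one matching (to get the expectation), and it uses the deterministic cap $|N_j|\le t$ (for reverse Markov) rather than any within-matching concentration. The adaptivity is handled by always using a fresh, independent matching at each step.
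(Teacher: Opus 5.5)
Your proof is correct and follows the same ball-growing argument as the paper: a first-moment lower bound $c\,t(k-t)/k$ on the cut edges of a fresh independent matching, anti-concentration from the cap $|N_j|\le t$, binomial domination with a Chernoff bound, a union bound over start vertices, and intersection of balls of size $>k/2$. Your reverse Markov step replaces the paper's Paley--Zygmund step (giving $c/4$ instead of $c/8$), and your lazy rule is a real improvement in rigor, since the paper's $B_r(u)$ absorbs vertices at each of the $m=\Theta(\log(k/\delta))$ steps and so as written only certifies radius $O(\log(k/\delta))$, whereas advancing the distance counter only on successful steps gives the $\delta$-independent $O(\log k)$ bound the lemma claims.
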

\begin{proof}
It is well-known that Erdös-Rényi random graphs with edge probability $p= \Omega(\log(n)/n)$ have diameter $O(\log(n))$ \cite{bollobas1981diameter,frieze2015introduction}. Here, we use a similar argument by analyzing the number of vertices reached from a fixed vertex across multiple independent matchings. Consider an arbitrary vertex $u \in V$, and let $B_r(u)$ be the set of vertices $S'$ can reach by a path of at most $r$ edges over the union of the first $r$ matchings, $r \in \mathbb{Z}^+$, with the $i$'th edge picked from the $i$'th matching, $1\le i \le r$. Let $\abs{B_r(u)}=x_r$, and consider the random variable $X_{r+1} \coloneq \abs{B_{r+1}(u)\setminus B_r(u)}$. Now, we just need to show that $X_r$ is sufficiently large at each step. Let us assume the first $r$ matchings are done, inducing the set $B_r(u)$. Now, the expected number of edges in the new matching $G_{r+1}$ going across the cut $D_r=(B_{r}(u), S \setminus B_{r}(u))$ is as follows.
\begin{equation}
\mathbb{E}[\abs{D_r}] = \mathbb{E}[X_{r+1}] = \sum_{v_i \in B_r(u)}\sum_{v_j \in S \setminus B_r(u)}\Pr[(v_i,v_j) \in E(G_{r+1})] \ge x_r(k-x_r)\frac{c}{k}
\end{equation}
Since $G_{r+1}$ is a matching, each of the edges correspond to a distinct new vertex not in $B_r(u)$ which can be reached by a path of length at most $r+1$, putting this element in $X_{r+1}$. Notice that $X_{r+1} \le  x_r$ since $G_{r+1}$ is a matching. Thus, applying the Paley-Zygmund inequality with $\mathbb{E}[X_{r+1}^2] \le \mathbb{E}[X_{r+1}]x_r$  (since $X_{r+1}^2 \le X_{r+1}x_r$), and $\theta=1/2$,
\begin{equation}
    \Pr\left[X_{r+1} \ge \frac{cx_r(k-x_r)}{2k}\right] \ge \Pr[X_{r+1} \ge \frac{1}{2}\mathbb{E}\left[X_{r+1}]\right]  \ge \left(1-\frac{1}{2}\right)^2\frac{\mathbb{E}[X_{r+1}]^2}{\mathbb{E}[X_{r+1}^2]} \ge \frac{cx_r(k-x_r)}{4kx_r} = \frac{c(k-x_r)}{4k} 
\end{equation}
Now, if we assume $x_r \le k/2$, then we get
\begin{equation}
    \Pr\left[X_{r+1} \ge \frac{cx_r}{4}\right] \ge \frac{c}{8}
\end{equation}
Thus, conditioned on the size of $B_r(u)$ to be $x_r$, with constant probability at least $c/8$, 
\begin{equation}
\frac{\abs{B_{r+1}(u)}}{\abs{B_r(u)}} \ge 1+\frac{c}{4} \,.
\end{equation}

If we take several such matchings, then we now have to show that with high probability, sufficiently many of the matchings expand the set of vertices by a multiplicative factor of at least $1+c/4$. More precisely, let $q=c/8$, and denote a matching successful if it expands the current set of reachable vertices by a factor of at least $1+c/4$. The previous equation shows that, conditioned on all previously successful matchings, each new matching is successful with probability at least $q$. Thus, the number of \textit{successful} matchings stochastically dominates the binomial random variable $\operatorname{Bin}(m,q)$. The number of such successful matchings needed to cover strictly more than $k/2$ vertices (since we assumed $x_r \le k/2$) is at most $\left\lceil \log_{(1+c/4)}(k/2)\right\rceil+1=O(\log(k))$ since $c$ is constant. Thus, by a Chernoff bound on $\operatorname{Bin}(m,q)$, if $m=\Omega(\log(k/\delta'))$, then with probability at least $1-\delta'$, a fixed vertex $S'$ is connected to more than $k/2$ vertices, each by path of length at most $O(\log(k))$.

Now, notice that the above analysis was done for an arbitrary fixed vertex $S'$. We can apply this to all vertices simultaneously, each with error probability $\delta'=\delta/k$. By a union bound, we get that with probability at least $1-\delta$, 
\begin{equation}
m = \Omega(\log(k^2/\delta))=\Omega(\log(k/\delta))   
\end{equation}
matchings are sufficient to ensure that the set $\abs{B_m(u)} > k/2$. So, each vertex $u \in V$ is connected to more than $k/2$ other vertices, each by a path of length at most $O(\log(k))$. To obtain a bound on the diameter of $G$, consider any two vertices $u,v\in V$. By the pigeonhole principle, the sets $B_m(u)$ and $B_m(v)$ must have a vertex in common since there are a total of $k$ vertices in $G$. Therefore, the distance between any two vertices via the path through their common vertex is at most $O(2\log(k))=O(\log(k))$, from which our claim follows. 
\end{proof}

\noindent We are now ready to prove the correctness of \cref{alg:k_sparse_phase_tomography}.
\begin{theorem}[Restatement of \cref{theorem:phase-main-theorem-intro}]
Given as input multiple copies of an unknown $n$-qubit $k$-sparse phase state $\ket{\psi}$, the support size $k$, and error parameters $\eps,\delta > 0$, there exists an algorithm (\cref{alg:k_sparse_phase_tomography}) that outputs a classical description of an estimate $\ket{\hat{\psi}}$ satisfying $|\bra{\psi}\hat{\psi}\rangle|^2 \ge 1 - \eps$ with probability at least $1-\delta$. The algorithm performs only single-copy measurements and requires $\Tilde{O}(k/\eps)$ samples and $\Tilde{O}(kn/\eps)$ time, when setting $\delta=1/n$, where the $\Tilde{O}$ suppresses logarithmic factors in $k$ and $n$.
\end{theorem}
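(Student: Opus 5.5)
The proof follows \cref{alg:k_sparse_phase_tomography} step by step. We bound the failure probability of each stage by $\delta/3$ and then combine the stages by a union bound.

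\textbf{Support learning.} Each computational-basis measurement returns a uniform element of $S$. By the coupon collector bound, $\Theta(k\log(k/\delta))$ samples reveal all of $S$ with probability at least $1-\delta/3$.

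\textbf{Graph connectivity.} Sample the random matrices $A_1,\dots,A_m$ with $m=O(\log(k/\delta))$. By \cref{lemma:linear-connectivity}, the graph $G_{S,\mathcal F}$ then has diameter $D=O(\log k)$ with probability at least $1-\delta/3$. This means every $x\in S$ is joined to the reference vertex $r$ by a path of length at most $D$.

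\textbf{Edge estimation.} Fix an edge $\{x_1,x_2\}$ created by $f_i$. Each copy processed in round $i$ lands in bucket $(i,y)$ with $y=f_i(x_1)$ with probability $2/k$. When it does, the post-measurement state is exactly $\frac{1}{\sqrt2}(e^{i\theta_{x_1}}\ket{x_1}+e^{i\theta_{x_2}}\ket{x_2})$, so $a=b=1/\sqrt2$.

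We want every edge estimate to satisfy $|\Delta(x_1,x_2)-(\theta_{x_2}-\theta_{x_1})|\le \eta$ with $\eta=\sqrt{\eps}/D$. By \cref{lemma:two_tom} with accuracy parameter $\eta^2=\eps/D^2$ and failure probability $\delta'=\delta/(3km)$, this needs $O(D^2\log(km/\delta)/\eps)$ copies per bucket. Multiplying by $k/2$ to account for the bucket probability, and using a Chernoff bound so every bucket receives enough copies, gives $\Theta(k\log^2(k)\log(k/\delta)/\eps)$ copies per round. This matches the algorithm up to the $\log m$ term, which is absorbed into $\Tilde O$. Each round yields at most $k/2$ edges, so there are at most $km$ edges in total, and a union bound over them costs $\delta/3$.

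\textbf{Fidelity.} The estimates are summed along BFS paths of length at most $D$. For every $x$, the phase error $\gamma_x=\hat\theta_x-\theta_x$ (with $\hat\theta_r=\theta_r$ after fixing the global phase) therefore satisfies $|\gamma_x|\le D\eta=\sqrt{\eps}$. Then
\[|\langle\psi|\hat\psi\rangle|\ge \mathrm{Re}\,\frac1k\sum_{x\in S} e^{i\gamma_x}=\frac1k\sum_{x\in S}\cos\gamma_x\ge 1-\frac{\eps}{2},\]
so $|\langle\psi|\hat\psi\rangle|^2\ge 1-\eps$. The main subtlety is that errors accumulate additively along paths. The diameter bound of \cref{lemma:linear-connectivity} keeps this accumulation to a $\log^2 k$ overhead instead of a factor of $k$. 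It also helps that the phase errors enter the fidelity only through the quadratic term of $\cos$, so a per-phase error of $\sqrt{\eps}$ is enough.

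\textbf{Complexity.} The total number of samples is $O(k\log(k/\delta))+m\cdot\Theta(k\log^2 k\log(k/\delta)/\eps)=\Tilde O(k/\eps)$ with $\delta=1/n$.

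For time, each copy costs $O(n\log k)$ gates to implement $U_{f_i}$, by \cref{lemma:linear-connectivity}. It also costs $O(n)$ for the CNOTs and measurements of \cref{alg:rel_phase_est}, plus $O(n)$ to read the outcome.

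The classical side is also near-linear. For each $i$, compute $f_i(x)$ for all $x\in S$ in $O(kn\log k)$ time and hash them into a table to find the preimage pairs. Running BFS on a graph with at most $km$ edges takes $O(km)$ time, and computing the phases takes $O(kD)$ time. The total is $\Tilde O(kn/\eps)$.
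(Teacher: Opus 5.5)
Your proposal is correct and follows the paper's proof essentially step for step: coupon-collector support learning, the $O(\log k)$ diameter bound from \cref{lemma:linear-connectivity}, per-edge estimation via \cref{lemma:two_tom} with $a=b=1/\sqrt{2}$ to accuracy $\sqrt{\eps}/D$, and accumulation of errors along BFS paths. The differences are minor refinements: you bound the fidelity directly through $\frac1k\sum_x\cos\gamma_x$ instead of the pairwise argument of \cref{lemma:error-analysis}, which tolerates per-phase error $\sqrt{\eps}$ rather than $\sqrt{\eps/2}$, and you use a per-bucket binomial Chernoff bound in place of the $l$-coupon-collector estimate, which avoids the paper's extra $\log\log k$ factor.
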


\begin{proof}
By the error analysis in \cref{lemma:error-analysis}, to get the required guarantee on the fidelity of $\ket{\hat{\psi}}$, it suffices for us to show that we learn each phase relative to some fixed reference basis state $\ket{r}$ to additive error at most $\sqrt{\eps/2}$. We show this as follows, following the steps in \cref{alg:k_sparse_phase_tomography}.

First, by measuring $O(k\log(k/\delta))$ copies of $\ket{\psi}$ in the computational basis, we can learn the full support $S$ with probability at least $1-\delta/4$ by a standard coupon collector argument.

Let $m$ be the size of the set of functions $\mathcal{F}$. We take a fresh copy of the sparse phase state and apply some $f_i \in \mathcal{F}$ on it, preparing the following state.
\begin{equation}
    \ket{\psi}=\frac{1}{\sqrt{k}}\sum_{x\in S} e^{i\theta_x}\ket{x} \xrightarrow{f_i} \frac{1}{\sqrt{k}}\sum_{x\in S} e^{i\theta_x}\ket{x}\ket{f_i(x)}.
\end{equation}
We then measure the second register, with some outcome $y$. Let us assume for now there exists exactly two inputs $x_1,x_2 \in S$ such that $f_i(x_1)=f_i(x_2)=y$, else we discard the copy of the state and repeat with a fresh copy. Now, we have prepared the state, up to global phase,
\begin{equation}
    \ket{\psi'} = \frac{1}{\sqrt{2}}(\ket{x_1}+e^{i(\theta_{x_2}-\theta_{x_1})}\ket{x_2}) \,,
\end{equation}
which we assign to bucket $(i,y)$.

Let $q$ be the total number of buckets $(i,y)$ over all $i$ for which there are exactly two elements of $S$ mapping to $y$. Since each such bucket contains exactly two elements of $S$, for every fixed $i$ there are at most $k/2$ such buckets. Therefore $q\le km/2$.

Now, we set the accuracy needed to estimate each relative phase in the image-pairs-graph $G_{S,\mathcal{F}}$. Since, by \cref{lemma:random-matching-radius}, the path from some reference state $r$ to any other basis state has length at most $t = O(\log(k))$ with probability at least $1-\delta/4$, by triangle inequality over the path it suffices to learn each edge phase to additive error at most $\sqrt{\eps/2}/t$. By \cref{lemma:two_tom}, given bitstrings $x_1,x_2$ and $l=O(\log(1/\delta')/\eps')$ copies of $\ket{\psi'}$, we can learn the relative phase between $\ket{x_1}$ and $\ket{x_2}$ to additive error at most $\sqrt{\eps'}$ with probability at least $1-\delta'$. Thus, we set $\eps'=\eps/2t^2$ and $\delta'=\delta/4q$. So, $l=O(t^2\log(km/\delta)/\eps)$.

Now we have to show that we indeed obtain the required number of copies of the state in each bucket, for each $f_i$. By expectation bounds for the $l$-coupon collector \cite{xu2011generalized}, and applying a standard concentration bound via Chernoff and union bound, $O(kl\log\log(k)+k\log(k/\gamma))$ samples suffice to collect at least $l$ samples in every relevant bucket for this $f_i$ with probability at least $1-\gamma$.

We set $\gamma=\delta/4m$. Taking a union bound over all $m$ functions, with probability at least $1-\delta/4$, every relevant bucket receives at least $l$ copies. 
It remains to count the samples. For each $f_i$, the number of copies used is
\begin{align}
C &\le O(kl\log\log(k)+k\log(k/\gamma)) \\
  &\le O(kt^2\log(km/\delta)\log\log(k)/\eps+k\log(km/\delta)) 
  \le \Tilde{O}(k\log^2(k)\log(k/\delta)/\eps)
\end{align}
hiding $\log\log k$ factors. Since there are $m$ functions, the total sample complexity of this algorithm is $Cm = \Tilde{O}(k\log^2(k)\log^2(k/\delta)/\eps)$ to succeed with probability at least $1-\delta$, taking a union bound over error in learning the support, the required path length, the minimum bucket size, and the number of functions.

Finally, by \cref{lemma:linear-connectivity}, each function $f_i$ followed by measurements can be implemented by an $O(n\log(k))$ size circuit of depth $O(\log(n))$, so it takes $Cmn\log(k)=\Tilde{O}(kn\log^3(k)\log^2(k/\delta)/\eps)$ time to learn and output all the relative phases assuming all the samples are generated and processed sequentially as is standard. The time to do the BFS traversal and other classical post-processing is also bounded by the number of edges in the graph. So, the total running time of the algorithm is $\Tilde{O}(kn\log^3(k)\log^2(k/\delta)/\eps) = \Tilde{O}(kn/\eps)$, when setting $\delta=1/n$.
\end{proof}

\section{The Pure State Algorithm}\label{section:sparse-pure}
We now show how to extend the algorithm to arbitrary sparse pure states. Consider the input state
\[
\ket{\psi}=\sum_{x\in S}\alpha_x\ket{x}=\sum_{x\in S}\sqrt{p_x}e^{i\theta_x}\ket{x},
\qquad S\subseteq\{0,1\}^n,\quad \abs{S} \le k,
\]
where $p_x>0$ for $x\in S$ and $\sum_{x\in S}p_x=1$. The phase state algorithm is the special case where the the probabilities $p_x$ are all equal to $1/k$.

The pure state algorithm is a direct generalization of the phase state algorithm with two additional components.
\begin{enumerate}[label = \arabic*)]
    \item Instead of applying a coupon collector style sampling procedure to learn the support of the state, we use a generic classical distribution learning algorithm to learn the probability distribution of the state in the $Z$ basis.
    \item Using the standard technique of bucketing, we partition all the basis states into bins which are nearly uniformly distributed; we also select the highest probability basis state as the common reference state $r$, and place it in every bin. We then learn a phase graph having small diameter for the basis states in each bin similarly to the phase state algorithm, allocating samples per bin according to their probabilities.
\end{enumerate}
Since states with probability mass much smaller than $\eps/k$ do not matter for fidelity, we first learn a high-probability subset of the support and an approximation to the distribution on it. We then partition the retained basis states into $L$ different probability bins labeled as $A_j, 1\le j \le L$ such that $A_j = \{x: t_j \le q_x \le 2t_j\}$ where $t_j = 2^{-j}$. For the bin $A_j$, we set $H_j=A_j\cup\{r\}$ and build the image-pairs graph using random linear functions as in the phase state algorithm on $H_j$. Since every such graph is connected with small diameter and contains $r$, all phases have the same global reference state and form an overall connected graph. Learning the phase graph requires a slight modification since we may have multiple probability bins for any particular value of a hash function, but this is handled by projecting onto the space for a particular bin.

The accuracy for learning the phase graphs for each $A_j$ is determined by its corresponding probability level. If we let $\hat\mu_j \coloneq \sum_{x\in A_j}q_x$ and $D_j=O(\log(|H_j|))$ is the diameter bound for the phase graph on $H_j$, then we have to learn every edge phase used at level $j$ to accuracy roughly
\[
    \xi_j \approx \sqrt{\frac{\eps}{L\hat\mu_jD_j^2}},
\]
where $L=O(\log(k/\eps))$ bounds the number of retained probability levels. Consequently, every $x\in A_j$ contributes phase error at most $D_j\xi_j$, and the weighted contribution of the level is
\[
    \sum_{x\in A_j}p_x(\hat\theta_x-\theta_x)^2=O(\eps/L).
\]
Summing over levels gives the weighted phase error required for fidelity.

\begin{algorithm}
\caption{$k$-Sparse Pure State Tomography}\label{alg:k_sparse_pure_tomography}
\begin{algorithmic}[1]
\State \textbf{Input:} Copies of an $n$-qubit state $\ket{\psi}=\sum_{x\in S}\sqrt{p_x}e^{i\theta_x}\ket{x}$ with $|S| = k$, and error parameters $\eps,\delta>0$.
\State \textbf{Output:} An estimate $\ket{\hat\psi}=\sum_{x\in S'}\sqrt{q_x}e^{i\hat\theta_x}\ket{x}$.
\State Run the preprocessing subroutine of \cref{lemma:classical-preprocessing} with error parameter $\eps$ and failure probability $\delta/4$, using computational basis measurements to generate samples from $p$.
\State Obtain a retained support $S'$, a probability distribution $q$ supported on $S'$, and probability buckets $(A_j,t_j)_{j=1}^L$, where $L=O(\log(k/\eps))$.
\State Fix the reference element $r\in S'$ such that $q_r=\max_{x\in S'}q_x$, breaking ties lexicographically, and set $\hat\theta_r=0$.
\For{$j=1,\ldots,L$}
    \State Let $\hat\mu_j\gets \sum_{x\in A_j}q_x$ and $H_j\gets A_j\cup\{r\}$.
    \State Let $h_j\gets |H_j|$.
    \If{$h_j=1$}
        \State continue.
    \EndIf
    \State Set $m_j = O(\log(h_jL/\delta)),D_j= O(\log(h_j)),\ell_j = \lceil\log(2h_j)\rceil$ according to \cref{lemma:linear-connectivity,lemma:classical-preprocessing}
    
    \State Draw independent random linear maps $f_{j,1},\ldots,f_{j,m_j}:\{0,1\}^n\to\{0,1\}^{\ell_j}$.
    \State Set $\xi_j = O\left(\sqrt{\eps/L\hat\mu_jD_j^2}\right)$ according to \cref{theorem:pure-main-theorem}.
    \State Initialize an empty graph $G_j=(H_j,E_j)$ and an empty table $\Delta_j$ of edge phase estimates.
    \For{$i=1,\ldots,m_j$}
        \State Compute the two-preimage buckets of $f_{j,i}$ inside $H_j$:
        \Statex
        \[
        \mathcal P_{j,i}=
        \left\{
        \{x,y\}\subseteq H_j:
        x\ne y,\ f_{j,i}(x)=f_{j,i}(y),\
        \left|\{z\in H_j:f_{j,i}(z)=f_{j,i}(x)\}\right|=2
        \right\}.
        \]
        \State Take
        \[
            M_j=O\left(\frac{1}{t_j}\left(
            \frac{\log(h_jLm_j/\delta)}{\xi_j^2}
            +\log\frac{h_jLm_j}{\delta}
            \right)\right)
        \]
        \Statex fresh copies. For each copy, append the hash register, apply the XOR oracle for $f_{j,i}$, and measure the hash register. If the outcome corresponds to a pair $\{x,y\}\in\mathcal P_{j,i}$, project the data register onto $\operatorname{span}\{\ket{x},\ket{y}\}$ and store the post-measurement state if the projection succeeds.
        \For{each $\{x,y\}\in\mathcal P_{j,i}$}
            \State Let $\lambda_{x,y}=c_\lambda\min\{q_x,q_y\}/(q_x+q_y)$.
            \State Use \cref{lemma:two_tom} on the stored states for $\{x,y\}$ with parameter $\lambda_{x,y}$ to estimate $\Delta_j(x,y)=\theta_y-\theta_x$ to error at most $\xi_j$.
            \State Add the edge $\{x,y\}$ to $E_j$ and set $\Delta_j(y,x)\gets-\Delta_j(x,y)$.
        \EndFor
    \EndFor
    \State Run BFS in $G_j$ from $r$. For every $x\in A_j\setminus\{r\}$, set $\hat\theta_x$ to be the sum of the stored edge estimates along the BFS path from $r$ to $x$.
\EndFor
\State \Return $\ket{\hat\psi}=\sum_{x\in S'}\sqrt{q_x}e^{i\hat\theta_x}\ket{x}$.
\end{algorithmic}
\end{algorithm}

In order to explain and analyze \cref{alg:k_sparse_pure_tomography}, we need to prove one additional lemma which generalizes the coarse coupon collector technique used in \cref{alg:k_sparse_phase_tomography}, combining it with bucketing probabilities.

\begin{lemma}[Classical preprocessing for learning and bucketing probabilities]\label{lemma:classical-preprocessing}
Let $p$ be an unknown distribution on $S \subseteq \{0,1\}^n$ such that $\abs{S}=k$. Given sample access to $p$ and parameters $\eta,\delta>0$, there is a classical algorithm which uses $O(k\log(k/\delta)/\eta)$ samples and outputs a set $S'\subseteq\{0,1\}^n$, a probability distribution $q$ supported on $S'$, and labeled probability buckets $(A_j,t_j)_{j=1}^L$, where each $A_j \subseteq S'$, $0 \le t_j \le 1$, such that the number of buckets is bounded by $L=O(\log(k/\eta))$, and with probability at least $1-\delta$ the following hold.
\begin{enumerate}
    \item $\sum_x\sqrt{p_xq_x}\ge 1-\eta$.
    \item The nonempty sets among $A_1,\ldots,A_L$ partition $S'$.
    \item For every nonempty level $A_j$ and every $x\in A_j$,
    \[
        t_j\le q_x<2t_j .
    \]
    \item For every $x\in S'$, $q_x=\Theta(p_x)$. Consequently, if we let
    \[  \mu_j \coloneq \sum_{x\in A_j}p_x, \qquad
        \hat\mu_j \coloneq \sum_{x\in A_j}q_x
    \]
    then $\hat\mu_j=\Theta(\mu_j)$ for every nonempty level $A_j$.
    \item Every nonempty level has mass
    \[
        \hat\mu_j=\Omega(\eta/L)
        \qquad\text{and}\qquad
        \mu_j=\Omega(\eta/L).
    \]
\end{enumerate}
\end{lemma}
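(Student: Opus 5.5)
We use the most direct construction. Draw $N=C\,k\log(k/\delta)/\eta$ samples from $p$ and form the empirical distribution $\tilde p$. Let $S'$ be the set of $x$ whose empirical count is at least $c_0\log(k/\delta)$, for a suitable constant $c_0$; equivalently $\tilde p_x\ge \tau\coloneq c_0\log(k/\delta)/N=\Theta(\eta/k)$. Set $q=\tilde p|_{S'}/\tilde p(S')$. Let $t_j=2^{-j}$ and $A_j=\{x\in S': t_j\le q_x<2t_j\}$. Since every $q_x\ge\tau$, only $j\le \log(1/\tau)=O(\log(k/\eta))$ can be nonempty, which gives the bound on $L$.

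\textbf{Concentration.} For each fixed $x\in S$, a multiplicative Chernoff bound together with a union bound over the $k$ support elements gives the following with probability $1-\delta$. Every $x$ with $p_x\ge \tau/4$ has $\tilde p_x=\Theta(p_x)$ (its expected count is $\Omega(\log(k/\delta))$). Every $x$ with $p_x<\tau/4$ has count below $c_0\log(k/\delta)$, so it is dropped. Hence every $x\in S'$ has $p_x=\Omega(\tau)$ and $\tilde p_x=\Theta(p_x)$. The discarded mass is $p(S\setminus S')\le k\cdot O(\tau)=O(\eta\log(k/\delta)/\log(k/\delta))$. We tune constants so that the threshold in probability is $\tau'=\eta/(4k)$ and the count threshold stays $\Theta(\log(k/\delta))$; this is what forces $N=\Theta(k\log(k/\delta)/\eta)$. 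The discarded mass is then at most $\eta/4$. It follows that $\tilde p(S')=\Theta(1)$ and $q_x=\Theta(p_x)$, which is item 4; summing within a level gives $\hat\mu_j=\Theta(\mu_j)$. Items 2 and 3 hold by construction.

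\textbf{Fidelity (item 1), the main obstacle.} Constant-factor accuracy $q_x=\Theta(p_x)$ alone only gives $\sum\sqrt{p_xq_x}=\Omega(1)$, not $1-\eta$. We need the sharper relative bound $|\tilde p_x-p_x|\le O(\sqrt{p_x\log(k/\delta)/N})$ from Bernstein. Write $1-\sum_x\sqrt{p_xq_x}=\tfrac12 H^2$-type terms. Then
\[
\sum_{x\in S'}(\sqrt{p_x}-\sqrt{\tilde p_x})^2\le\sum_x \frac{(p_x-\tilde p_x)^2}{p_x}=O\!\left(\frac{k\log(k/\delta)}{N}\right)=O(\eta).
\]
The renormalization by $\tilde p(S')$ costs $O(\eta)$ more, and the discarded mass contributes at most $\eta/4$. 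Combining these bounds Hellinger distance by $O(\eta)$, which gives item 1 after rescaling $\eta$ by a constant. I expect this step to need the most care: the Bernstein bound must be applied uniformly over the $k$ elements with the $\log(k/\delta)$ factor, and the three error sources must be combined into a single $1-\eta$.

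\textbf{Light levels (item 5).} Levels with $\hat\mu_j<\eta/(4L)$ are moved out of $S'$, and $q$ is renormalized. The total mass removed this way is at most $L\cdot \eta/(4L)=\eta/4$ under $q$, and hence $O(\eta)$ under $p$ by item 4. Redoing the fidelity accounting therefore still gives $1-O(\eta)$. The surviving levels satisfy $\hat\mu_j=\Omega(\eta/L)$, and by item 4 also $\mu_j=\Omega(\eta/L)$. Renormalizing by a $1\pm O(\eta)$ factor can push a few $q_x$ across a bucket boundary. To handle this, recompute the buckets after renormalization; dropping light levels is then idempotent up to constants. Alternatively, define the buckets using the unnormalized $\tilde p$ with $t_j$ rescaled accordingly.
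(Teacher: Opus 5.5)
Your proposal is correct and follows essentially the same route as the paper: the empirical distribution, a $\Theta(\eta/k)$ retention threshold justified by multiplicative Chernoff with a union bound, dyadic bucketing, discarding levels of mass $O(\eta/L)$, and renormalizing with all thresholds rescaled by the common normalization factor. That last step is your second option and is what the paper does; your first option (recomputing buckets) is not safe, since elements pushed across a boundary can create new light levels. The one difference is in item 1: you prove it directly from a per-element Bernstein bound summed over $S'$, which the $\log(k/\delta)$ factor in the sample budget affords, and you account explicitly for truncation and renormalization, whereas the paper cites the standard empirical Hellinger guarantee and simply asserts that truncation costs $O(\eta)$.
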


\begin{proof}
The proof follows from standard results on learning distributions in Hellinger distance and \emph{bucketing} distributions. We defer the complete proof to the Appendix in \cref{lemma:classical-preprocessing-app}.
\end{proof}

\noindent It only remains to prove the correctness of \cref{alg:k_sparse_pure_tomography}.

\begin{theorem}[Restatement of \cref{theorem:pure-main-theorem}]\label{theorem:pure-main}
Given as input multiple copies of an unknown $n$-qubit $k$-sparse pure state $\ket{\psi}$, the support size $k$, and error parameters $\eps,\delta > 0$, there exists an algorithm (\cref{alg:k_sparse_pure_tomography}) that outputs a classical description of an estimate $\ket{\hat{\psi}}$ satisfying $|\bra{\psi}\hat{\psi}\rangle|^2 \ge 1 - \eps$ with probability at least $1-\delta$. The algorithm performs only single-copy measurements and requires $\Tilde{O}(k/\eps)$ samples and $\Tilde{O}(kn/\eps)$ time, when setting $\delta=1/n$, where the $\Tilde{O}$ suppresses logarithmic factors in $k,n$ and $1/\eps$.
\end{theorem}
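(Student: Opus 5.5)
The proof follows the steps of \cref{alg:k_sparse_pure_tomography} and splits the error into a magnitude part and a phase part. I will first invoke \cref{lemma:classical-preprocessing} with $\eta=\eps/4$ and failure probability $\delta/4$. This gives $S'$, the distribution $q$, and buckets $A_j$ with $q_x=\Theta(p_x)$, each nonempty level having mass $\Omega(\eps/L)$, and $\sum_x\sqrt{p_xq_x}\ge 1-\eps/4$. The cost is $\Tilde{O}(k/\eps)$ samples and $\Tilde{O}(kn/\eps)$ time.

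For fidelity, I write $\braket{\psi|\hat\psi}=\sum_{x\in S'}\sqrt{p_xq_x}\,e^{i(\hat\theta_x-\theta_x)}$ after fixing the global phase so that $\theta_r=0$. Using $\mathrm{Re}\,e^{i\phi}\ge 1-\phi^2/2$, I get
\[
|\braket{\psi|\hat\psi}|\ge\sum_x\sqrt{p_xq_x}-\tfrac12\sum_x\sqrt{p_xq_x}\,(\hat\theta_x-\theta_x)^2 .
\]
The first sum is at least $1-\eps/4$, and $\sqrt{p_xq_x}=\Theta(p_x)$. So it suffices to show $\sum_x p_x(\hat\theta_x-\theta_x)^2=O(\eps)$ with a small enough constant; squaring then gives fidelity at least $1-\eps$. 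This is the weighted analogue of \cref{lemma:error-analysis}.

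Per level, I apply \cref{lemma:linear-connectivity} (via \cref{lemma:random-matching-radius}) to $H_j=A_j\cup\{r\}$ with failure probability $\delta/(4L)$. Then every $x\in A_j$ has a path of length at most $D_j=O(\log h_j)$ to $r$ in $G_j$. If every edge phase is learned to accuracy $\xi_j$, the triangle inequality gives $|\hat\theta_x-\theta_x|\le D_j\xi_j$. The level's contribution is then at most $\mu_jD_j^2\xi_j^2=O(\eps/L)$, using $\hat\mu_j=\Theta(\mu_j)$ and the choice of $\xi_j$. Summing over the $L$ levels gives $O(\eps)$.

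For the edge estimates, take a pair $\{x,y\}\in\mathcal P_{j,i}$. After measuring the hash outcome and projecting onto $\mathrm{span}\{\ket x,\ket y\}$, the state is $a\ket x+be^{i(\theta_y-\theta_x)}\ket y$ with $a^2=p_x/(p_x+p_y)$. The target is $\xi_j^2$ in \cref{lemma:two_tom}, which needs $O(\log(1/\delta')/(a^2b^2\xi_j^2))$ copies.

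The main obstacle is the edges touching $r$. Here $r$ may have probability much larger than $t_j$, so $a^2b^2\approx\min(q_x,q_r)/(q_x+q_r)$ can be tiny. This is the role of the $\lambda_{x,y}$ parameter. A given pair state arises with probability about $(p_x+p_y)\cdot\Pr[\text{hash}]$, where the projection succeeds with probability $p_x+p_y$ (up to hash collisions with other levels, since we project). The number of pair states collected is therefore proportional to $p_x+p_y$, while the number required scales as $(p_x+p_y)/\min(p_x,p_y)$. Their product means raw copies needed scale as $1/\min(p_x,p_y)\approx 1/t_j$, which is exactly the $1/t_j$ factor in $M_j$.

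Concretely, I take $M_j$ copies per hash function. An $l$-coupon-collector plus Chernoff argument, as in the phase case with $\Theta(h_j)$ buckets, then shows that each pair gets enough states with probability $1-\delta/(4Lm_j)$. A union bound over all pairs, hash functions and levels bounds the failure probability by $\delta$.

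Finally I count resources. The number of copies is $\sum_j m_jM_j=\Tilde{O}\bigl(\sum_j \tfrac{1}{t_j}\cdot\tfrac{L\hat\mu_jD_j^2}{\eps}\bigr)$. Since $\hat\mu_j\le 2t_j|A_j|$, this is $\Tilde{O}\bigl(\sum_j|A_j|L/\eps\bigr)=\Tilde{O}(k/\eps)$. The additive $\log$ term contributes $\sum_j 1/t_j\lesssim\sum_j|A_j|L/\eps$ by property 5, which is also $\Tilde{O}(k/\eps)$. Each copy costs $O(n\log k)$ gates for the hash plus $O(n)$ for \cref{alg:rel_phase_est}, giving $\Tilde{O}(kn/\eps)$ time. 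Classical bookkeeping (pair tables and BFS) is near-linear in the $\Tilde{O}(k)$ edges, and setting $\delta=1/n$ only adds logarithmic factors.
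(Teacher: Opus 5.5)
Your proposal is correct and follows essentially the same route as the paper's proof. The shared steps are: the preprocessing lemma; per-bucket diameter bounds on $H_j=A_j\cup\{r\}$; projection onto $\operatorname{span}\{\ket{x},\ket{y}\}$ with a per-pair Chernoff bound, where $\min\{q_x,q_y\}\ge t_j$ justifies the $1/t_j$ factor in $M_j$; error accumulation along BFS paths weighted by $\hat\mu_j$; and the sample count via $\hat\mu_j/t_j=\Theta(|A_j|)$. One small correction: for a fixed $f_{j,i}$, each copy yields a stored state for $\{x,y\}$ with probability exactly $p_x+p_y$, so there is no extra $\Pr[\text{hash}]$ factor, and the raw-copy count is the ratio of the required number to this per-copy rate, not their product.
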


\begin{proof}
We first run the preprocessing subroutine from \cref{lemma:classical-preprocessing} with error parameter $\eps$ and failure probability $\delta/4$. With probability at least $1-\delta/4$, we obtain a retained support $S'$, a distribution $q$ supported on $S'$, and probability buckets $(A_j,t_j)_{j=1}^L$, where $L=O(\log(k/\eps))$, such that
\[
    \sum_x\sqrt{p_xq_x}\ge 1-\eps.
\]
The nonempty buckets partition $S'$, and for every nonempty bucket $A_j$,
\[
    t_j\le q_x<2t_j \quad (\forall x\in A_j),
    \qquad
    q_x=\Theta(p_x)\quad (\forall x\in S'),
\]
and
\[
    \hat\mu_j\coloneq \sum_{x\in A_j}q_x=\Theta(\mu_j),
    \qquad
    \mu_j\coloneq \sum_{x\in A_j}p_x,
    \qquad
    \hat\mu_j,\mu_j=\Omega(\eps/L).
\]
Now, we set $r\in S'$ such that $q_r=\max_{x\in S'}q_x$ to be the reference state, and fix its global phase $\theta_r=0$.

For each nonempty bucket $A_j$, set $H_j=A_j\cup\{r\}$ and $h_j=|H_j|$. If $h_j=1$, there is no phase to learn. Otherwise, since $r$ maximizes $q_x$, every $x\in H_j$ satisfies $q_x\ge t_j$. We would like to learn the relative phases corresponding to the states in each bucket to with failure probability at most $\delta/4L$. So, similarly to the phase state analysis, by \cref{lemma:linear-connectivity}, we pick $m_j=O(\log(h_jL/\delta))$ independent random linear maps $f_{j,1},\ldots,f_{j,m_j}:\{0,1\}^n\to\{0,1\}^{\ell_j}$ with $\ell_j=\lceil \log(2h_j)\rceil$, getting an image-pairs graph $G_j$ on $H_j$ of diameter at most $D_j=O(\log(h_j))$ with failure probability at most $\delta/(4L)$. Union bounding over the $L$ buckets, all these diameter bounds hold with probability at least $1-\delta/4$.

It remains to show that the edge phases in these graphs are learned to sufficient accuracy. Consider any bucket $A_j$, a hash function $f_{j,i}$, and an edge $e=\{x,y\}\in\mathcal P_{j,i}$ where $P_{j,i}$ is the relative phase graph for the bucket $A_j$ on the $i'$th hash function for this bucket. On any sample, after applying the hash and measuring the hash register, we may have basis states in superposition which do not belong to $A_j$. This is unlike the case of phase states since there we would only have 1 bucket. To fix this, we apply an additional projection onto $\operatorname{span}\{\ket{x},\ket{y}\}$ which succeeds with total probability exactly $p_x+p_y$ (and takes time $O(n)$ just by checking equality to $x,y$ with ancilla and CNOTs). We then obtain the state
\[
    \ket{\psi_{x,y}}
    =
    \frac{\sqrt{p_x}e^{i\theta_x}\ket{x}
    +\sqrt{p_y}e^{i\theta_y}\ket{y}}{\sqrt{p_x+p_y}} .
\]
Its smaller squared amplitude is  $\alpha_{x,y} \coloneq \frac{\min\{p_x,p_y\}}{p_x+p_y}$.
Since $q_x=\Theta(p_x)$ on $S'$, we have $\alpha_{x,y}=\Theta\!\left(\frac{\min\{q_x,q_y\}}{q_x+q_y}\right)$.
Thus, by choosing $c_\lambda$ sufficiently small according to \cref{lemma:classical-preprocessing}, the value
$\lambda_{x,y}  = c_\lambda\frac{\min\{q_x,q_y\}}{q_x+q_y}$ used by the algorithm is a valid lower bound on $\alpha_{x,y}$. Therefore \cref{lemma:two_tom} learns
$\abs{\theta_y-\theta_x}$ to error at most $\xi_j$ with probability at least $1-\delta/8Lm_jh_j$ using at most
\[
    R_{x,y}
    =
    \Theta\!\left(
        \frac{q_x+q_y}{\min\{q_x,q_y\}}
        \frac{\log(h_jLm_j/\delta)}{\xi_j^2}
    \right)
\]
copies, where
\[
    \xi_j
    =
    O\left(\sqrt{\frac{\eps}{L\hat\mu_jD_j^2}}\right).
\]
It remains to check that the algorithm produces this many stored copies from its original copies. After applying a hash function, in order to get the edge
$e=\{x,y\}$, we project onto the desired state with probability exactly $p_x+p_y$. Hence the expected number of stored copies is $M_j(p_x+p_y)$ where the number of copies after each hash is 
\[
    M_j
    \coloneq
    O\!\left[
        \frac{1}{t_j}\left(
        \frac{\log(h_jLm_j/\delta)}{\xi_j^2}
        +\log\frac{h_jLm_j}{\delta}
        \right)
    \right]
\]
original copies for this hash. Since $p_x+p_y=\Theta(q_x+q_y)$ and every vertex of $H_j=A_j\cup\{r\}$ has probability mass in $q$ which at least $t_j$, by \cref{lemma:classical-preprocessing}, we get
\[
    M_j(p_x+p_y)
    =
    \Omega\!\left(
        \frac{q_x+q_y}{t_j}
        \left(
        \frac{\log(h_jLm_j/\delta)}{\xi_j^2}
        +\log\frac{h_jLm_j}{\delta}
        \right)
    \right)
    \ge
    \Omega\!\left(
        R_{x,y}
        +
        \log\frac{h_jLm_j}{\delta}
    \right),
\]
where the last inequality uses $\min\{q_x,q_y\}\ge t_j$, which is sufficient samples. A Chernoff bound therefore implies that $e$
receives at least $R_{x,y}$ stored copies except with probability at most $\delta/(8Lm_jh_j)$. Union bounding over all buckets, hashes, and edges, every edge receives enough stored copies with probability at least $1-\delta/8$.

Now, we run \cref{lemma:two_tom} for every edge with failure probability $\delta/(8Lm_jh_j)$. Since each $\mathcal P_{j,i}$ is a matching of size at most $h_j/2$, a union bound over all $L$ buckets, $m_j$ hashes, and $h_j$ edges shows that every edge phase estimate in every graph $G_j$ has error at most $\xi_j$ with probability at least $1-\delta/8$.

Now, consider the BFS tree from $r$ in each graph $G_j$. For every $x\in A_j$, the path from $r$ to $x$ has length at most $D_j$, and hence $\abs{\hat\theta_x-\theta_x}\le D_j\xi_j$. The error of $r$ is zero, since it is the reference. Now, we calculate the total error. By the preprocessing guarantee, $q_x=\Theta(p_x)$ on $S'$, and hence
\[
    \sum_{x\in A_j}\sqrt{p_xq_x}=O\!\left(\sum_{x\in A_j}q_x\right)=O(\hat\mu_j).
\]
Thus,
\begin{align*}
    \sum_{x\in S'}\sqrt{p_xq_x}\,(\hat\theta_x-\theta_x)^2
    &=\sum_{j=1}^L\sum_{x\in A_j}\sqrt{p_xq_x}\,(\hat\theta_x-\theta_x)^2  \\
    &\le \sum_{j=1}^L O(\hat\mu_jD_j^2\xi_j^2) \\
    &\le O(\eps).
\end{align*}
Choosing appropriate constants, we obtain the bounds
\[
    \sum_x\sqrt{p_xq_x}\ge 1-\eps/4
    \qquad\text{and}\qquad
    \sum_{x\in S'}\sqrt{p_xq_x}\,(\hat\theta_x-\theta_x)^2\le \eps/4.
\]
By \cref{lemma:pure-error-analysis}, this implies that with probability at least $1-\delta$,
\[
    |\bra{\psi}\hat{\psi}\rangle|^2\ge 1-\eps .
\]

This shows correctness. We now briefly analyze sample and time complexity. The preprocessing step uses $\Tilde{O}(k/\eps)$ samples. For one hash in bucket $A_j$, the number of samples used, with high probability (setting $\delta=1/n$), is
\[
    M_j=\Tilde{O}\!\left(
        \frac{1}{t_j}\cdot\frac{1}{\xi_j^2}
    \right)
    =
    \Tilde{O}\!\left(
        \frac{1}{t_j}\left(\frac{L\hat\mu_jD_j^2}{\eps}\right)
    \right).
\]
Since every nonempty bucket has $\hat\mu_j=\Omega(\eps/L)$, the second term is absorbed into the first. Thus, including the $m_j$ hashes and hiding logarithmic factors, we require $\Tilde{O}\!\left(\frac{L\hat\mu_j}{t_j\eps}\right)$ samples for bucket $j$.
Because $q_x\in[t_j,2t_j)$ on $A_j$, we have $\hat\mu_j/t_j=\Theta(|A_j|)$. So, bucket $j$ uses $\Tilde{O}\!\left(\frac{L|A_j|}{\eps}\right)$ copies. Thus, summing over all the disjoint nonempty buckets gives total sample complexity $\Tilde{O}(k/\eps)$.

For the running time, each preprocessing sample is measured in the computational basis which costs $O(n)$. For a fixed bucket and hash, building the classical hash table over $H_j$ costs $\Tilde{O}(h_jn)$ time. Since the buckets are disjoint and the same reference is added to each nonempty bucket, the total table building cost over all buckets and hashes is $\Tilde{O}(kn)$, which is dominated by the sampling cost. Each sample uses one linear hash circuit from \cref{lemma:linear-connectivity}, one projection, and one single-copy measurement for the tomography, with the total cost being $\Tilde{O}(n)$. Hence the total running time is $\Tilde{O}(kn/\eps)$. We note that the log factors in our sample and time complexity analyses are probably not tight and could be improved with further optimizations; we leave this for future work.
\end{proof}

\section{The Mixed State Algorithm}
\label{section:mixed-state}
We now show how an application of the random purification channel technique \cite{tang2025conjugate, girardi2025random} allows us to learn sparse mixed states with near optimal sample complexity. The key observation is that any purification of $k$-sparse rank-$r$ mixed state results in a $kr$-sparse pure state. This is analogous to prior work on reducing mixed state tomography to pure state tomography \cite{pelecanos2025mixed}, we simply observe that a similar more fine-grained reduction holds, which scales with the sparsity of the state. We restate their theorem here, first proposed in the work of \cite{tang2025conjugate}.

\begin{theorem}[Random purification channel]
\label{thm:random-purification-channel}
There is a quantum channel $\Phi_{\mathrm{Purify}}^{d,r}$ such that given $m$ copies of a rank-$r$ mixed state $\rho \in \mathbb{C}^{d \times d}$,
\begin{equation}
\Phi_{\mathrm{Purify}}^{d,r}\!\left(\rho^{\otimes m}\right) =
\mathbb{E}_{|\rho\rangle}
\left[
|\rho\rangle\!\langle\rho|^{\otimes m}
\right],   
\end{equation}
where the expectation is over a uniform random purification \footnote{this refers to a distribution over purifications which is obtained by applying a Haar-random unitary to the second register of any fixed purification $\ket{\rho} \in \mathbb{C}^d \otimes \mathbb{C}^r$} $|\rho\rangle \in \mathbb{C}^d \otimes \mathbb{C}^r$ of $\rho$. This channel $\Phi_{\mathrm{Purify}}^{d,r}$ can moreover be implemented within $\eps$ error in diamond distance  by a circuit which takes time $\operatorname{poly}\!\left(m,\log(d),\log(1/\eps)\right).$
\end{theorem}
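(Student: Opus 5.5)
This is the random purification channel theorem of \cite{tang2025conjugate,girardi2025random}, which the paper quotes rather than proves. The natural route is to rebuild it from Schur--Weyl duality. The plan has three steps: identify the target state $\mathbb{E}_{\ket{\rho}}\bigl[\ket{\rho}\!\bra{\rho}^{\otimes m}\bigr]$ in Schur--Weyl form, write down a channel that produces exactly this state from $\rho^{\otimes m}$, and then implement that channel efficiently.

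\textbf{Step 1: the target state.} Fix the purification $\ket{\rho_0}=(\sqrt{\rho}\otimes I)\ket{\Omega}$, where $\ket{\Omega}$ is the unnormalized maximally entangled vector on $\mathbb{C}^d\otimes\mathbb{C}^r$ (after restricting to the $r$-dimensional support of $\rho$). A uniform random purification is then $(I\otimes U)\ket{\rho_0}$ with $U$ Haar on $\mathbb{C}^r$. The $m$-fold tensor power lives in $(\mathbb{C}^d)^{\otimes m}\otimes(\mathbb{C}^r)^{\otimes m}$. By Schur--Weyl duality,
\[
(\mathbb{C}^d)^{\otimes m}\cong\bigoplus_{\lambda}Q^d_\lambda\otimes P_\lambda,
\qquad
(\mathbb{C}^r)^{\otimes m}\cong\bigoplus_{\lambda}Q^r_\lambda\otimes P_\lambda,
\]
and $\ket{\Omega}^{\otimes m}$ decomposes as $\bigoplus_\lambda \ket{\Omega_{Q_\lambda}}\ket{\Omega_{P_\lambda}}$, pairing each $Q^d_\lambda$ with $Q^r_\lambda$ and each $P_\lambda$ with its partner. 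Twirling the second system by $U^{\otimes m}$ acts only on the $Q^r_\lambda$ factors. By Schur's lemma this replaces every $Q^r_\lambda$ factor with a maximally mixed state and kills the coherences between different $\lambda$. Hence
\[
\mathbb{E}_U\bigl[\ket{\rho}\!\bra{\rho}^{\otimes m}\bigr]
=\bigoplus_\lambda \Bigl(\text{$\rho^{\otimes m}$ restricted to $Q^d_\lambda\otimes P_\lambda$, purified along $P_\lambda$}\Bigr)\otimes \frac{I_{Q^r_\lambda}}{\dim Q^r_\lambda}.
\]

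\textbf{Step 2: the channel.} Apply the Schur transform to $\rho^{\otimes m}$, which commutes with the permutation action. Measure the Young diagram $\lambda$ weakly, i.e.\ keep $\lambda$ coherent as a label. Then do two things:
\begin{enumerate}[label=(\roman*)]
\item Adjoin a maximally entangled state between a fresh copy of $P_\lambda$ and a reference, and move the $P_\lambda$ part of the input into entanglement with the new register. This is possible because $\rho^{\otimes m}$ is proportional to the identity on $P_\lambda$, so the permutation register carries no information and can be \emph{re-purified}.
\item Prepare the maximally mixed state on $Q^r_\lambda$.
\end{enumerate}
Finally apply the inverse Schur transforms on both sides. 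Checking that the output equals the expression in Step 1 then reduces to block-by-block bookkeeping, using $\rho^{\otimes m}=\bigoplus_\lambda \rho_{Q_\lambda}\otimes I_{P_\lambda}$. The central claim is that $\rho_{Q_\lambda}$, combined with a maximally entangled $P_\lambda$ pair, reproduces the reduced structure of $\ket{\Omega}^{\otimes m}$.

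\textbf{Step 3: efficiency.} The cost is dominated by the Schur transform, which can be implemented to diamond error $\eps$ in time $\operatorname{poly}(m,\log d,\log(1/\eps))$ using the Bacon--Chuang--Harrow construction, or via the permutation-basis (Krovi) approach. Preparing the maximally entangled and maximally mixed states on irreps of known dimension adds only polynomial overhead. Errors add across the polynomially many gates, so the diamond-distance bound follows.

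\textbf{Main obstacle.} I expect the hardest part to be Step 2: verifying that the re-purification reproduces \emph{exactly} the $U$-twirled purification rather than a state that is merely equivalent to it. This requires correctly matching the dual $P_\lambda$ and $Q_\lambda$ pairings, and in particular the conjugate representation on the purifying side. I would check this first on $m=1$ and $m=2$ before writing the general case.
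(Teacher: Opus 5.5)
The paper gives no proof of this theorem; it imports it directly from \cite{tang2025conjugate,girardi2025random}. Your Schur--Weyl reconstruction is correct and is essentially the construction behind that citation: apply the Schur transform and read off $\lambda$, discard the maximally mixed $P_\lambda$ register in favor of one half of a fresh maximally entangled pair whose other half goes to the purifying side, place the maximally mixed state on $Q^r_\lambda$, then undo both Schur transforms.

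The pairing worry you flag disappears in Young--Yamanouchi and Gelfand--Tsetlin bases. There the Schur transform $W$ is real orthogonal, so $(W\otimes W)\sum_{\mathbf i}\ket{\mathbf i}\ket{\mathbf i}=\sum_b\ket{b}\ket{b}$, i.e.\ $\ket{\Omega}^{\otimes m}=\sum_{\lambda:\,\ell(\lambda)\le r}\ket{\Phi_{Q_\lambda}}\ket{\Phi_{P_\lambda}}$ holds literally.

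The one step you assert too quickly is preparing the irrep states in $\operatorname{poly}(m,\log r,\log(1/\eps))$ time. This matters most for $I_{Q^r_\lambda}/\dim Q^r_\lambda$ when $r$ is exponential in $n$. Knowing $\dim Q^r_\lambda$ is not enough; you need an explicit efficient uniform sampler for semistandard tableaux of shape $\lambda$ with entries in $[r]$. One plausible route is sequential sampling from exact tableau counts.
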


Now, for states which have rank-$r$ and are also $k$-sparse, we obtain the following corollary. Recall that $k$-sparse mixed states are those states which have support of size $k$ when measured in the computational basis.

\begin{corollary}[Random purification channel for sparse states]
\label{cor:random-purification-channel-sparse}
There is a quantum channel $\Phi_{\mathrm{Purify}}^{d,r}$ such that given $m$ copies of a rank-$r$ $k$-sparse mixed state $\rho \in \mathbb{C}^{d \times d}$,
\begin{equation}
\Phi_{\mathrm{Purify}}^{d,r}\!\left(\rho^{\otimes m}\right) =
\mathbb{E}_{|\rho\rangle}
\left[
|\rho\rangle\!\langle\rho|^{\otimes m}
\right],   
\end{equation}
where the expectation is over a uniform random purification $\ket{\rho} \in \mathbb{C}^d \otimes \mathbb{C}^r$ of $\rho$, such that it is is in a subspace isomorphic to $\mathbb{C}^k \otimes \mathbb{C}^r$ . Moreover, $\ket{\rho}$ is $kr$-sparse in the computational basis. This channel $\Phi_{\mathrm{Purify}}^{d,r}$ can moreover be implemented within $\eps$ error in diamond distance  by a circuit which takes time $\operatorname{poly}\!\left(m,\log d,\log(1/\eps)\right).$
\end{corollary}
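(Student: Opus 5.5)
The plan is to take the channel $\Phi_{\mathrm{Purify}}^{d,r}$ of \cref{thm:random-purification-channel} unchanged and prove only the extra structural claims. These are that every purification $\ket{\rho}$ in the support of the uniform distribution lies in a subspace isomorphic to $\mathbb{C}^k\otimes\mathbb{C}^r$, and that it is $kr$-sparse in the computational basis. The identity $\Phi(\rho^{\otimes m})=\mathbb{E}_{\ket{\rho}}[\ket{\rho}\!\bra{\rho}^{\otimes m}]$ and the $\poly(m,\log d,\log(1/\eps))$ implementation bound are inherited verbatim, since the channel does not depend on the sparsity of $\rho$.

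\textbf{Step 1: a canonical purification inside the sparse subspace.} Write the spectral decomposition $\rho=\sum_{i=1}^r\lambda_i\ket{v_i}\!\bra{v_i}$. Because $\rho=\Pi_S\rho\Pi_S$ with $|S|\le k$, we have $\bra{v_i}\rho\ket{v_i}=\lambda_i$, and hence $\Pi_S\ket{v_i}=\ket{v_i}$ for every $i$ with $\lambda_i>0$. To see this, note that $\ket{v_i}$ is an eigenvector with nonzero eigenvalue, so it lies in the range of $\rho$, and that range is contained in the range of $\Pi_S$. We then take the fixed purification
\[
\ket{\rho_0}=\sum_{i=1}^r\sqrt{\lambda_i}\,\ket{v_i}\otimes\ket{i}\in \mathrm{span}\{\ket{x}:x\in S\}\otimes\mathbb{C}^r .
\]
This space is isomorphic to $\mathbb{C}^k\otimes\mathbb{C}^r$.

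\textbf{Step 2: invariance under the randomization.} By the footnote definition, a uniformly random purification has the form $(I\otimes U)\ket{\rho_0}$ with $U$ Haar random on $\mathbb{C}^r$. Since $I\otimes U$ acts trivially on the first register, it preserves $(\Pi_S\otimes I)$. So every $\ket{\rho}$ in the support of the distribution satisfies $(\Pi_S\otimes I)\ket{\rho}=\ket{\rho}$. Expanding in the computational basis $\{\ket{x}\ket{j}:x\in\{0,1\}^n,\ j\in[r]\}$ (identifying $\mathbb{C}^r$ with $\lceil\log r\rceil$ qubits, padding if necessary), only basis vectors with $x\in S$ can have nonzero amplitude. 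There are at most $|S|\cdot r\le kr$ such vectors, which gives $kr$-sparsity.

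\textbf{Step 3: independence from the choice of purification.} I would also remark that the conclusion does not depend on which purification was fixed. All purifications of $\rho$ on $\mathbb{C}^d\otimes\mathbb{C}^r$ are related by unitaries on the second register, so the argument of Step 2 applies to any of them.

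The main, though mild, obstacle is Step 1: justifying that the eigenvectors, and hence the purification, are supported on $S$. One must use the characterization $\rho=\Pi_S\rho\Pi_S$ rather than just the diagonal support. The cleanest route is to note that $\mathrm{range}(\rho)=\mathrm{range}(\Pi_S\rho\Pi_S)\subseteq\mathrm{range}(\Pi_S)$. If instead one starts from the definition via the measured distribution, one first shows that $\bra{x}\rho\ket{x}=0$ for $x\notin S$ forces $\rho\ket{x}=0$ by positivity. This holds because a PSD matrix with a zero diagonal entry has the corresponding row and column equal to zero, which recovers $\rho=\Pi_S\rho\Pi_S$.
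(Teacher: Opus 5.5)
Your proof is correct, and it reaches the same intermediate conclusion as the paper: every purification lies in $\mathrm{span}\{\ket{x}:x\in S\}\otimes\mathbb{C}^r$. From there, the count $|S|\cdot r\le kr$ and the inheritance of the channel identity and runtime from \cref{thm:random-purification-channel} are identical in both. The difference is how you establish the support claim.

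You construct an explicit Schmidt-form purification from the spectral decomposition. You then use $\mathrm{range}(\rho)\subseteq\mathrm{range}(\Pi_S)$ to place it in the sparse subspace, and transport this to the random purification via the Haar rotation on the second register. In Step 3 you also invoke the unitary freedom of purifications.

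The paper instead argues on an arbitrary purification directly:
\[
\bigl\|(\bra{x}\otimes I)\ket{\rho}\bigr\|^2=\Tr\bigl[(\ket{x}\!\bra{x}\otimes I)\ket{\rho}\!\bra{\rho}\bigr]=\bra{x}\rho\ket{x}=0 \qquad \text{for } x\notin S.
\]
This one-line identity makes your Steps 1 and 3 unnecessary. It also dissolves what you call the main obstacle. It uses only that the diagonal entries of $\rho$ vanish outside $S$, which is exactly the measured-distribution definition. So neither the eigenvectors nor the PSD zero-row/column argument is needed.

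Your route is more constructive and matches the footnote's Haar-randomized description of the distribution. However, it proves nothing the direct identity does not, since the paper's bound already holds for every purification, including each one in the support of the random distribution.

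One wording point: in Step 1, the ``hence'' after $\bra{v_i}\rho\ket{v_i}=\lambda_i$ does not follow from that equality. It is the range argument you give next that actually justifies $\Pi_S\ket{v_i}=\ket{v_i}$.
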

\begin{proof}
It suffices to show that every purification of a $k$-sparse rank-$r$ mixed state is supported on a $kr$-dimensional subspace, so for a random purification this will also be true. Let $S \subseteq [d]$ denote the support of $\rho$ in the computational basis, with $|S|=k$. Then, since $\rho = \Pi_S \rho \Pi_S$, and by definition of a purification, $\Tr_2[\ket{\rho}\bra{\rho}] = \rho$, this implies $(\bra{i} \otimes I)\ket{\rho}=0$ for any $i \notin S$. Thus,
any purification $|\rho\rangle \in \mathbb{C}^d \otimes \mathbb{C}^r$ of $\rho$ must lie in
\[
\operatorname{span}\{|i\rangle : i\in S\} \otimes \mathbb{C}^r.
\]
where the first space is isomorphic to $\mathbb{C}^k$. So, $|\rho\rangle$ lies in a subspace isomorphic to $\mathbb{C}^k \otimes \mathbb{C}^r$, and in particular is a $kr$-sparse pure state over the computational basis. The sample and runtime implementation guarantees follow directly from \cref{thm:random-purification-channel}.
\end{proof}

This brings us to our main theorem for learning $k$-sparse rank-$r$ mixed states. We follow the generic reduction algorithm as described in \cite{pelecanos2025mixed}.

\begin{algorithm}
\caption{$k$-Sparse Mixed State Tomography}\label{alg:k_sparse_mixed_tomography}
\begin{algorithmic}[1]
\State \textbf{Input:} Copies of an unknown $n$-qubit $k$-sparse rank-$r$ mixed state $\rho$, the sparsity bound $k$, the rank bound $r$, and error parameters $\eps,\delta>0$.
\State \textbf{Output:} An estimate $\hat\rho$.
\State Apply the random purification channel $\Phi_{\mathrm{Purify}}^{d,r}$ to $m$ copies of $\rho$, obtaining $m$ copies of a random purification $\ket{\rho} \in \mathbb{C}^d \otimes \mathbb{C}^r$.
\State Run \cref{alg:k_sparse_pure_tomography} on the $m$ purified samples with sparsity parameter $kr$, error parameter $\eps$, and failure probability $\delta/2$, obtaining an estimate $\ket{\hat\rho}$.
\State Return a succinct or sparse classical description of $\hat{\rho} = \Tr_2(\ket{\hat\rho}\bra{\hat{\rho}})$
\end{algorithmic}
\end{algorithm}

\begin{theorem}[Restatement of \cref{theorem:mixed-main-theorem}]\label{theorem:mixed-main}
Given as input multiple copies of an unknown $n$-qubit $k$-sparse rank-$r$ mixed state $\rho$, the support size $k$, the rank $r$, and error parameters $\eps,\delta > 0$, there exists an algorithm (\cref{alg:k_sparse_mixed_tomography}) that outputs a classical description of an estimate $\hat{\rho}$ satisfying $F(\rho,\hat{\rho}) \ge 1-\eps$ with probability at least $1-\delta$. The algorithm requires $m=\Tilde{O}(kr/\eps)$ samples and $\poly(m,n)$ time, when setting $\delta=1/n$, where the $\Tilde{O}$ suppresses logarithmic factors in $k$ and $1/\eps$.
\end{theorem}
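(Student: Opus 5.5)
The plan is to reduce to \cref{theorem:pure-main} via \cref{cor:random-purification-channel-sparse}, and then pass the pure-state fidelity guarantee down to the reduced state using monotonicity of fidelity under partial trace.

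First, apply $\Phi_{\mathrm{Purify}}^{d,r}$ to $m$ copies of $\rho$. By \cref{cor:random-purification-channel-sparse}, the output equals $\mathbb{E}_{\ket{\rho}}\big[\ket{\rho}\!\bra{\rho}^{\otimes m}\big]$, where each purification $\ket{\rho}\in\mathbb{C}^d\otimes\mathbb{C}^r$ is $kr$-sparse in the computational basis. The purifying register $\mathbb{C}^r$ is encoded in $\lceil\log r\rceil$ qubits, so this is a $(n+\lceil\log r\rceil)$-qubit $kr$-sparse pure state.

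Conditioned on the sampled purification, the $m$ outputs are i.i.d.\ copies of a fixed $kr$-sparse pure state. Run \cref{alg:k_sparse_pure_tomography} on them with sparsity $kr$, error $\eps$ and failure probability $\delta/3$. By \cref{theorem:pure-main}, for every fixed $\ket{\rho}$ it outputs $\ket{\hat\rho}$ with $|\langle\rho|\hat\rho\rangle|^2\ge 1-\eps$ with probability at least $1-\delta/3$, using $m=\Tilde{O}(kr/\eps)$ copies. Because this guarantee is uniform over all purifications, averaging over the random purification preserves the success probability.

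The channel is only implemented to diamond-norm error $\eps_{\mathrm{ch}}$. The total variation distance between the real and ideal output distributions of the whole procedure is therefore at most $\eps_{\mathrm{ch}}$. Setting $\eps_{\mathrm{ch}}=\delta/3$ costs only $\poly(m,n,\log(1/\delta))$ time, since $\log d = n$.

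For the fidelity, set $\hat\rho=\Tr_2\ket{\hat\rho}\!\bra{\hat\rho}$. Since $\Tr_2\ket{\rho}\!\bra{\rho}=\rho$ and fidelity is monotone under the partial trace (equivalently, Uhlmann's theorem), we get
\[F(\rho,\hat\rho)\ge |\langle\rho|\hat\rho\rangle|^2\ge 1-\eps.\]
A union bound over the channel error and the pure-state failure gives overall success probability at least $1-\delta$.

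For the running time, the purification circuit is $\poly(m,n)$ and \cref{alg:k_sparse_pure_tomography} runs in $\Tilde{O}(kr(n+\log r)/\eps)\le\poly(m,n)$. The output $\ket{\hat\rho}$ has at most $kr$ nonzero amplitudes, so $\hat\rho$ can be written down as a sparse matrix with at most $k^2$ entries by contracting over the second register, in $\poly(kr)\le\poly(m)$ time.

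The step I expect to be the main obstacle is the conditioning argument. The joint output of the channel is a mixture of product states, not a product of identical copies of a fixed state. I would handle it by noting that any measurement procedure applied to a convex mixture succeeds with the mixture-weighted average of its per-component success probabilities. The same linearity argument disposes of the adaptive choices inside \cref{alg:k_sparse_pure_tomography}: its sample split across the preprocessing and hashing stages is fixed in advance given $k,\eps,\delta$ and does not depend on the purification.
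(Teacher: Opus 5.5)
Your proposal is correct and follows the paper's proof: purify via \cref{cor:random-purification-channel-sparse}, run \cref{alg:k_sparse_pure_tomography} with sparsity $kr$, apply Uhlmann's theorem, and union bound with the channel's diamond-norm error. Your mixture-linearity and total-variation remarks make explicit what the paper states in one line, but one justification is inaccurate: the copy allocation in \cref{alg:k_sparse_pure_tomography} is not fixed in advance, because $M_j$, $m_j$, $h_j$ depend on the learned buckets. This is harmless, since your linearity argument already covers adaptive procedures provided the total $m$ is a fixed upper bound, and it is one, because the per-bucket counts sum to $\Tilde{O}(kr/\eps)$ for any bucketing of at most $kr$ points.
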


\begin{proof}
In step 3 of \cref{alg:k_sparse_mixed_tomography}, we obtain $m$ purified copies $\ket{\rho}$ of $\rho$, which we know by \cref{cor:random-purification-channel-sparse} is $kr$-sparse. By \cref{theorem:pure-main}, setting error parameters $\eps, \delta/2$ and sparsity parameter $kr$, we learn an estimate $\ket{\hat{\rho}}$ such that
\[
1-|\langle \rho|\hat{\rho}\rangle|^2 \leq \eps
\]
with probability at least $1-\delta/2$.
Since $\ket{\rho}$ and $\ket{\hat\rho}$ are purifications of $\rho$ and
$\hat\rho$, respectively, by Uhlmann's theorem,
\[
F(\rho,\hat\rho)
\geq
|\langle\rho|\hat\rho\rangle|^2
\geq
1-\eps.
\]
Since this guarantee holds for every fixed purification, it also holds
after averaging over the random purification.
The additional $\delta/2$ error in diamond distance from implementing the purification channel gives total failure probability at most $\delta$. Note that applying \cref{theorem:pure-main} with sparsity $kr$ and error $\eps$ requires $m=\Tilde{O}(kr/\eps)$ samples and $\Tilde{O}(krn/\eps)$ time. However, the overall runtime of our mixed state tomography algorithm is dominated by the random purification channel, which takes $\poly(m,n)$ time for $\delta=1/n$. This completes the proof.
\end{proof}

\section*{Acknowledgments}
I thank Scott Aaronson for teaching the graduate course that inspired the basis of this work, and generally for helpful discussions and comments. I also thank Ronak Ramachandran for valuable feedback on an earlier draft of this paper, as well as Vishnu Iyer and William Kretschmer for helpful discussions.

I made limited use of GPT 5.5 and 5.6 Pro during the research process, in particular to help with the analysis in the proof of \cref{theorem:pure-main} and to write up initial drafts of the proofs in the Appendix. I have verified and edited the proofs for correctness and clarity.

\bibliographystyle{alpha}

\bibliography{references}

\vspace{1em}

\appendix

\begin{center}
    {\LARGE \textbf{Appendix}}
\end{center}

\section{Lower Bound}
\begin{proposition}[Lower bound for learning $k$-sparse rank-$r$ mixed states]\label{lemma:lower-bound}
Let $\rho \in \mathbb{C}^{d \times d}$ be an arbitrary $k$-sparse rank-$r$ mixed state such that $k > 1$, and $1 \le r \le k \le d$. Then, $m=\Omega(kr/\eps)$ samples are necessary to learn $\rho$ to fidelity greater than $1-\eps$. This lower bound holds even if the computational basis support of $\rho$ is fixed in advance and revealed to the algorithm.
\end{proposition}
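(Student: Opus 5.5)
The approach is a reduction: I embed general mixed-state tomography on a $k$-dimensional space into the $k$-sparse setting and then invoke the known fidelity lower bound for full tomography. Fix any set $S\subseteq\{0,1\}^n$ (more generally $S\subseteq[d]$) with $|S|=k$, and reveal it to the learner. Every density matrix $\sigma$ on $\operatorname{span}\{\ket{i}:i\in S\}\cong\mathbb{C}^k$ with rank $r$ is, viewed inside $\mathbb{C}^{d\times d}$, a $k$-sparse rank-$r$ state, since $\sigma=\Pi_S\sigma\Pi_S$.

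Suppose there is an algorithm that uses $m$ copies and outputs $\hat\rho$ with $F(\rho,\hat\rho)>1-\eps$ (with constant success probability) for every $k$-sparse rank-$r$ $\rho$. I turn it into a $k$-dimensional learner in three steps.
\begin{enumerate}
\item Isometrically embed copies of an unknown rank-$r$ state $\sigma$ on $\mathbb{C}^k$ into $\mathbb{C}^d$ via $\ket{j}\mapsto\ket{s_j}$, where $S=\{s_1,\dots,s_k\}$.
\item Run the sparse learner on the embedded copies.
\item Post-process its output by compressing it to $S$: set $\hat\sigma=\Pi_S\hat\rho\Pi_S/\Tr(\Pi_S\hat\rho)$, or simply $\Pi_S\hat\rho\Pi_S$ padded with a fixed state.
\end{enumerate}

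The fidelity must survive the post-processing. Since $\rho$ is supported in $\Pi_S$, we have $F(\rho,\hat\rho)=F(\rho,\Pi_S\hat\rho\Pi_S)$, because $\sqrt\rho\,\hat\rho\sqrt\rho=\sqrt\rho\,\Pi_S\hat\rho\Pi_S\sqrt\rho$. Normalizing by $\Tr(\Pi_S\hat\rho)\le 1$ only increases the fidelity. Thus the compressed estimate on $\mathbb{C}^k$ achieves fidelity $>1-\eps$ with $\sigma$.

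This yields a rank-$r$ tomography algorithm on dimension $k$ with fidelity error $\eps$ using $m$ copies, even allowing entangled measurements. I then invoke the known lower bound $\Omega(kr/\eps)$ for learning rank-$r$ states in dimension $k$ to infidelity $\eps$ [Haah et al.\ 2016, \cite{haah2016sample}; the fidelity bound $\Omega(dr/\eps)$ there holds for general measurements], which gives $m=\Omega(kr/\eps)$. For $r=1$ one can alternatively cite the pure-state $\Omega(k/\eps)$ bound directly. The hypothesis $k>1$ is needed only so that the instance is nontrivial: for $k=1$ the state is a fixed basis state.

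The main obstacle is checking that the cited lower bound applies to exactly this regime: fidelity (rather than trace distance), arbitrary $r\le k$, and constant success probability. The Haah et al.\ fidelity lower bound is stated as $\Omega(dr/\eps)$ up to a $\log$ factor in some versions. If only that weaker form is available, I would instead rederive a clean bound by a packing argument. I would take a family of $\exp(\Omega(kr))$ rank-$r$ states on $\mathbb{C}^k$ that are pairwise at infidelity $\ge 4\eps$; for instance, these can be built as $\rho_U\propto P+\sqrt{\eps}\,$(perturbation), with $P$ a fixed rank-$r$ projector and the perturbations coming from a Gilbert–Varshamov set of $r\times(k-r)$ sign matrices. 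I would then bound the Holevo information of $m$ copies of a uniformly random member by $O(m\eps)$ and apply Fano, giving $m=\Omega(kr/\eps)$. The time bound $\Omega(krn/\eps)$ then follows immediately, since each of the $m$ samples has $n$ qubits that must be read.
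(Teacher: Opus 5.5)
Your reduction is the same as the paper's. You embed a rank-$r$ state on $\mathbb{C}^k$ isometrically onto $k$ fixed computational basis vectors, run the sparse learner, then compress its output with $\Pi_S$ and renormalize. Your identity $\sqrt\rho\,\hat\rho\sqrt\rho=\sqrt\rho\,\Pi_S\hat\rho\,\Pi_S\sqrt\rho$, together with $\Tr(\Pi_S\hat\rho)\le 1$, is exactly what underlies the paper's step $F(\rho,\hat\rho)=F(\sigma,\hat\sigma)/q$. This part is correct.

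The gap is in the external lower bound you invoke. The fidelity lower bound of Haah et al.\ \cite{haah2016sample} is not $\Omega(dr/\eps)$ but $\Omega\bigl(dr/(\eps\log(d/(r\eps)))\bigr)$. Citing it therefore only gives $m=\Omega\bigl(kr/(\eps\log(k/(r\eps)))\bigr)$, which is weaker than the statement. The paper instead invokes the log-free $\Omega(dr/\eps)$ bound of \cite{yuen2023improved, scharnhorst2025optimal}; with that citation your argument is complete.

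Your fallback does not close the gap as sketched, for three reasons.
\begin{itemize}
\item The ensemble needs the right form. To be PSD of rank $r$ it should be $\rho_X\propto VV^\dagger$ with $V=\begin{pmatrix} I_r\\ \sqrt{\eps}X\end{pmatrix}$, not $P$ plus a first-order perturbation.
\item The single-copy Holevo information is too large. The average of this ensemble puts weight $\approx\eps$ on the complement of $P$, so for $k\ge 2r$ the single-copy Holevo information is of order $\eps\log(k/(r\eps))$; the binary-entropy term alone is $\approx\eps\log(1/\eps)$. Bounding the $m$-copy Holevo information by $m$ times this via subadditivity reproduces exactly the Haah et al.\ logarithmic loss.
\item The claim $\chi_m=O(m\eps)$ is not true uniformly in $m$. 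Already for $r=1$ with a Haar-random perturbation, one computes $\chi_m\approx m\eps\log(k/(m\eps))$ for $m\eps\ll k$. The logarithm becomes a constant only near the threshold $m\eps\asymp k$.
\end{itemize}
Getting the needed bound at that threshold requires a genuinely collective analysis of the entropy of $\mathbb{E}\,\rho_X^{\otimes m}$. That is precisely the content of Yuen's improvement, not a routine Holevo--Fano calculation.
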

\begin{proof}
We reduce arbitrary rank-$r$ mixed-state tomography in dimension $k$
to $k$-sparse rank-$r$ mixed-state tomography. Suppose, for the sake of contradiction, that there exists an algorithm $\mathcal{A}$ which learns any $k$-sparse rank-$r$ mixed state to fidelity at least $1-\eps$ using $m=o\!\left(\frac{kr}{\eps}\right)$ samples. We assume $\mathcal{A}$ is given the computational basis support as input as well.

\noindent Now, let $\rho\in\mathbb{C}^{k\times k}$ be an arbitrary rank-$r$
mixed state. Consider an isometry
\[
V:\mathbb{C}^k\rightarrow\mathbb{C}^{d},
\]
such that $d\geq k$, which maps the standard basis vectors of $\mathbb{C}^k$ to $k$ fixed computational-basis vectors.  Applying this isometry, we obtain the state $\sigma=V\rho V^\dagger$, which has rank $r$ and is supported on at most $k$ vectors in the computational basis, and hence is $k$-sparse.

We now apply our sparse state tomography algorithm $\mathcal{A}$ with $m$ copies of $\sigma$, obtaining an estimate $\widehat{\sigma}$ such
that their fidelity $F(\sigma,\widehat{\sigma})\geq 1-\eps$. We now want to convert this back to a fidelity guarantee for learning $\rho$, so we invert the isometry $V$ on its image.

So, let $\Pi=VV^\dagger$ and $ q=\Tr(\Pi\widehat{\sigma})$. On the success event, $q>0$. Then $\widehat{\rho} = \frac{V^\dagger\widehat{\sigma}V}{q}$.  Since $\sigma$ is supported on the image of $V$, and by the definition of fidelity, 
\[
F(\rho,\widehat{\rho}) = F(\sigma,\widehat{\sigma})/q. 
\] 
As $q\leq 1$, it follows that
\[ 
F(\rho,\widehat{\rho}) \geq F(\sigma,\widehat{\sigma}) \geq 1-\eps. 
\] 
Thus, algorithm $\mathcal{A}$ would be able to learn an arbitrary rank-$r$ mixed state in dimension $k$ to infidelity $\eps$ using 
$m=o\!\left(kr/\eps\right)$ samples. However, rank-$r$ mixed-state tomography in dimension $k$ to infidelity $\eps$ requires 
$\Omega\!\left(kr/\eps\right)$ samples \cite{yuen2023improved, scharnhorst2025optimal}. Therefore, $\mathcal{A}$ requires at least 
\[ 
m=\Omega\!\left(\frac{kr}{\eps}\right). 
\] 
samples of $\rho$, thus proving our lower bound.
\end{proof}

\section{Error Analysis}
\subsection{Phase States}
\begin{lemma}[Phase states error analysis]\label{lemma:error-analysis}
Let
\[
\ket{\psi}=\frac{1}{\sqrt{k}}\sum_{x\in S} e^{i\theta_x}\ket{x}
\qquad\text{and}\qquad
\ket{\hat{\psi}}=\frac{1}{\sqrt{k}}\sum_{x\in S} e^{i\hat{\theta}_x}\ket{x},
\]
where $S\subseteq\{0,1\}^n$ and $|S|=k$. Suppose that, after fixing a global phase convention,
\[
\gamma=\max_{x\in S}\abs{\Delta\hat{\theta}_x}\le \sqrt{\eps/2},
\qquad
\text{where}
\Delta\hat{\theta}_x=\hat{\theta}_x-\theta_x .
\]
Then
\[
\abs{\bra{\psi}\hat{\psi}\rangle}^2\ge 1-\eps .
\]
\end{lemma}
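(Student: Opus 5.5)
The plan is a direct computation of the overlap. Both states share the support $S$ and have uniform amplitude $1/\sqrt{k}$, so
\[
\bra{\psi}\hat{\psi}\rangle=\frac{1}{k}\sum_{x\in S}e^{i(\hat\theta_x-\theta_x)}=\frac{1}{k}\sum_{x\in S}e^{i\Delta\hat\theta_x}.
\]
A global phase convention only multiplies this quantity by a unit-modulus scalar, so the modulus is unaffected. It is therefore legitimate to work in the convention under which $\gamma\le\sqrt{\eps/2}$ holds.

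The key step is to lower bound the modulus by its real part. We have
\[
\abs{\bra{\psi}\hat{\psi}\rangle}\ge \mathrm{Re}\,\bra{\psi}\hat{\psi}\rangle=\frac1k\sum_{x\in S}\cos(\Delta\hat\theta_x).
\]
Next apply the elementary inequality $\cos t\ge 1-t^2/2$, valid for all real $t$. This gives
\[
\abs{\bra{\psi}\hat{\psi}\rangle}\ge 1-\frac{1}{2k}\sum_{x\in S}(\Delta\hat\theta_x)^2\ge 1-\frac{\gamma^2}{2}\ge 1-\frac{\eps}{4}.
\]
Since $\eps/4\le 1$ can be assumed (otherwise the claim is trivial), the right-hand side is nonnegative. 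Squaring then yields
\[
\abs{\bra{\psi}\hat{\psi}\rangle}^2\ge (1-\eps/4)^2\ge 1-\eps/2\ge 1-\eps.
\]

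There is no real obstacle here; the argument has slack. The only care needed is the nonnegativity check before squaring, together with the remark that the hypothesis is phrased after fixing a global phase. It is also worth noting that the proof uses only the averaged quantity $\frac1k\sum_x(\Delta\hat\theta_x)^2$, not the maximum $\gamma$. This weighted-average form is what generalizes to \cref{lemma:pure-error-analysis}, where the weights $\sqrt{p_xq_x}$ replace $1/k$ and an extra amplitude term $\sum_x\sqrt{p_xq_x}\ge 1-\eps/4$ must be combined in the same way.
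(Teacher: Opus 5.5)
Your proof is correct, but it takes a slightly different route from the paper's.

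The paper squares first. It expands $\abs{\bra{\psi}\hat{\psi}\rangle}^2$ as the double sum $\frac{1}{k^2}\sum_{x,y\in S}\cos(\Delta\hat{\theta}_x-\Delta\hat{\theta}_y)$ and applies $1-\cos t\le t^2/2$ to each pairwise difference. It then uses the triangle inequality $\abs{\Delta\hat{\theta}_x-\Delta\hat{\theta}_y}\le 2\gamma$, so each of the $k^2$ terms costs at most $2\gamma^2\le\eps$; this uses the hypothesis with no slack.

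You instead bound the modulus by its real part, apply the cosine inequality to a single sum, and square only at the end. That requires the nonnegativity check, which you handle correctly, since $\eps>4$ makes the claim trivial. In return you get the stronger conclusion $1-\eps/2$, and your bound depends only on the average $\frac1k\sum_x(\Delta\hat{\theta}_x)^2$, not on the maximum $\gamma$. Your argument is also exactly the template the paper itself uses in \cref{lemma:pure-error-analysis}, so it would let the two appendix lemmas share one proof.

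The paper's double-sum form has a compensating virtue: it involves only the differences $\Delta\hat{\theta}_x-\Delta\hat{\theta}_y$, so independence from the global phase convention is built in. If one kept those squares instead of bounding each by $4\gamma^2$, it would give $1$ minus the empirical variance of the phase errors, which is also an average-type bound.
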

\begin{proof}
The fidelity between the two states is as follows.
\begin{align}
   \abs{\bra{\psi}\hat\psi\rangle}^2 
   &= \left| \sum_{x\in S} \frac{1}{k} e^{i(\hat{\theta}_x - \theta_x)} \right|^2 \\
   &= \frac{1}{k^2}\sum_{x,y \in S}e^{i((\hat{\theta}_x - \theta_x) - (\hat{\theta}_y - \theta_y))} \\
   &= \frac{1}{k^2}\sum_{x,y \in S}[\cos((\hat{\theta}_x - \theta_x) - (\hat{\theta}_y - \theta_y)) + i\sin((\hat{\theta}_x - \theta_x) - (\hat{\theta}_y - \theta_y))] \\
   &= \frac{1}{k^2}\sum_{x,y \in S}
   \cos\!\big((\hat{\theta}_x - \theta_x) - (\hat{\theta}_y - \theta_y)\big) \qquad \text{(since $\sin(\theta)+\sin(-\theta)=0$)}\\ 
   &= \frac{1}{k^2}\sum_{x,y \in S} \cos\!\big(\Delta\hat{\theta}_{x} - \Delta\hat\theta_{y}\big) \\
   &=  \frac{1}{k^2} \sum_{x,y \in S} \left[1 - \Bigl(1-\cos\!\big(\Delta\hat{\theta}_{x} - \Delta\hat\theta_{y}\big)\Bigr)\right] \\
   &= 1 - \frac{1}{k^2}\sum_{x,y \in S} \Bigl(1-\cos\!\big(\Delta\hat{\theta}_{x} - \Delta\hat\theta_{y}\big)\Bigr) \qquad \text{(since $\sum_{x,y \in S}1 = k^2$)} \\
   &\ge 1 - \frac{1}{k^2}\sum_{x,y \in S} \frac{\big(\Delta\hat{\theta}_{x} - \Delta\hat\theta_{y}\big)^2}{2} 
   \qquad \text{(since $1-\cos(t)\le t^2/2$)} \\
   &\ge 1 - \frac{1}{k^2}\sum_{x,y \in S} \frac{(2\gamma)^2}{2} 
   \qquad \text{(by the triangle inequality)}\\
   &\ge 1 - \frac{1}{k^2}\sum_{x,y \in S}\eps  \\
   &= 1 - \eps .
\end{align}
Note that all these estimates are up to some global phase convention, relative to some fixed reference basis state.
\end{proof}

\subsection{Pure States}
\begin{lemma}[Pure states error analysis]\label{lemma:pure-error-analysis}
Let
\[
\ket{\psi}=\sum_x\sqrt{p_x}e^{i\theta_x}\ket{x}
\qquad\text{and}\qquad
\ket{\hat\psi}=\sum_x\sqrt{q_x}e^{i\hat\theta_x}\ket{x}.
\]
Suppose that $\sum_x\sqrt{p_xq_x}\ge 1-\eta$ with $\eta \le 1/2$, and that after fixing a global phase convention, we write each phase error 
$\Delta_x=\hat\theta_x-\theta_x$ such that
\[
\sum_{x:q_x>0}\sqrt{p_xq_x}\,\Delta_x^2\le \eta.
\]
Then
\[
|\bra{\psi}\hat\psi\rangle|^2\ge 1-4\eta.
\]
\end{lemma}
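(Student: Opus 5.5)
We bound $|\bra{\psi}\hat\psi\rangle|$ from below by its real part and then square. We have
\[
\bra{\psi}\hat\psi\rangle=\sum_{x}\sqrt{p_xq_x}\,e^{i\Delta_x},
\]
where only $x$ with $p_x,q_x>0$ contribute. Hence
\[
|\bra{\psi}\hat\psi\rangle|\ \ge\ \mathrm{Re}\,\bra{\psi}\hat\psi\rangle=\sum_x\sqrt{p_xq_x}\cos\Delta_x .
\]
The lower bound by the real part holds for the fixed global phase convention, so no optimization over global phase is needed. Next use $\cos t\ge 1-t^2/2$, which holds for all real $t$, so no smallness assumption on the individual $\Delta_x$ is required. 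This gives
\[
\sum_x\sqrt{p_xq_x}\cos\Delta_x\ \ge\ \sum_x\sqrt{p_xq_x}-\frac12\sum_{x:q_x>0}\sqrt{p_xq_x}\,\Delta_x^2\ \ge\ (1-\eta)-\frac{\eta}{2}=1-\frac{3\eta}{2}.
\]
Here we used the two hypotheses: the Bhattacharyya overlap bound $\sum_x\sqrt{p_xq_x}\ge 1-\eta$ and the weighted phase-error bound.

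Since $\eta\le 1/2$, the quantity $1-3\eta/2$ is at least $1/4>0$, so squaring preserves the inequality:
\[
|\bra{\psi}\hat\psi\rangle|^2\ \ge\ \Bigl(1-\tfrac{3\eta}{2}\Bigr)^2\ \ge\ 1-3\eta\ \ge\ 1-4\eta .
\]
This is slightly stronger than required.

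The argument is short, and the only step needing care is making sure the sign of $1-3\eta/2$ is nonnegative before squaring; that is exactly where the hypothesis $\eta\le 1/2$ enters. One should also note that the phases $\theta_x$ for $x\notin\mathrm{supp}(p)$ and $\hat\theta_x$ for $x\notin\mathrm{supp}(q)$ are irrelevant, since those terms vanish in the overlap. This is consistent with the phase-error sum being restricted to $q_x>0$.
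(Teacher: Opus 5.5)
Your proposal is correct and follows essentially the same route as the paper: bound $|\bra{\psi}\hat\psi\rangle|$ below by its real part, apply $1-\cos t\le t^2/2$ with the two hypotheses, and square using $\eta\le 1/2$ for nonnegativity. The only difference is that you keep the sharper intermediate bound $1-3\eta/2$, whereas the paper rounds it to $1-2\eta$ before squaring; both yield the claimed $1-4\eta$.
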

\begin{proof}
After multiplying $\ket{\hat\psi}$ by a global phase, recall that the errors $\Delta_x$ are relative to the same reference phase. Then
\begin{align}
\operatorname{Re}(\bra{\psi}\hat\psi\rangle)
&= \sum_x\sqrt{p_xq_x}\cos(\Delta_x)\\
&\ge \sum_x\sqrt{p_xq_x}-\frac{1}{2}\sum_x\sqrt{p_xq_x}\Delta_x^2
\qquad\text{(since $1-\cos(t)\le t^2/2$)}\\
&\ge 1-\eta-\eta/2\ge 1-2\eta.
\end{align}
Therefore $|\bra{\psi}\hat\psi\rangle|\ge 1-2\eta$, and hence
\[
|\bra{\psi}\hat\psi\rangle|^2\ge (1-2\eta)^2\ge 1-4\eta,
\]
where in the last step we used $\eta\le 1/2$.
\end{proof}

\section{Classical preprocessing}
\begin{lemma}[Restatement of \cref{lemma:classical-preprocessing}]\label{lemma:classical-preprocessing-app}
Let $p$ be an unknown distribution on $S \subseteq \{0,1\}^n$ such that $\abs{S}=k$. Given sample access to $p$ and parameters $\eta,\delta>0$, there is a classical algorithm which uses $O(k\log(k/\delta)/\eta)$ samples and outputs a set $S'\subseteq\{0,1\}^n$, a probability distribution $q$ supported on $S'$, and labeled probability buckets $(A_j,t_j)_{j=1}^L$, where each $A_j \subseteq S'$, $0 \le t_j \le 1$, such that the number of buckets is bounded by $L=O(\log(k/\eta))$, and with probability at least $1-\delta$ the following hold.
\begin{enumerate}
    \item $\sum_x\sqrt{p_xq_x}\ge 1-\eta$.
    \item The nonempty sets among $A_1,\ldots,A_L$ partition $S'$.
    \item For every nonempty level $A_j$ and every $x\in A_j$,
    \[
        t_j\le q_x<2t_j .
    \]
    \item For every $x\in S'$, $q_x=\Theta(p_x)$. Consequently, if we let
    \[  \mu_j \coloneq \sum_{x\in A_j}p_x, \qquad
        \hat\mu_j \coloneq \sum_{x\in A_j}q_x
    \]
    then $\hat\mu_j=\Theta(\mu_j)$ for every nonempty level $A_j$.
    \item Every nonempty level has mass
    \[
        \hat\mu_j=\Omega(\eta/L)
        \qquad\text{and}\qquad
        \mu_j=\Omega(\eta/L).
    \]
\end{enumerate}
\end{lemma}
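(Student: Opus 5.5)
We will draw $N = C k\log(k/\delta)/\eta$ samples from $p$ and form the empirical distribution $\hat p$. The plan is to keep only elements that are empirically heavy, renormalize on them, and bucket by dyadic levels. Concretely, set a threshold $\tau = c\,\eta/k$ and let $S_0=\{x:\hat p_x\ge \tau\}$.

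The first step is multiplicative concentration. For every $x\in S$ with $p_x\ge \tau/4$, a Chernoff bound gives $\hat p_x\in[p_x/2,\,2p_x]$ except with probability $\delta/(2k)$. This works because $Np_x=\Omega(\log(k/\delta))$ once $C$ is a large enough multiple of $1/c$. Elements with $p_x<\tau/4$ reach $\hat p_x\ge\tau$ only with probability at most $\delta/(2k)$, again by a Chernoff upper tail. Since $|S|=k$, a union bound shows that with probability $1-\delta$ every retained $x$ has $\hat p_x=\Theta(p_x)$. The same event also gives that every $x$ with $p_x\ge 4\tau$ is retained.

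The second step handles the discarded mass and the Hellinger guarantee. The discarded elements have total $p$-mass at most $4k\tau=4c\eta$. For the retained elements we will not aim for $\hat p_x=\Theta(p_x)$ alone, which would not control $\sum\sqrt{p_xq_x}$. Instead we use the standard fact that the empirical distribution on $k$ atoms with $N=O(k\log(1/\delta)/\eta)$ samples has squared Hellinger distance $H^2(p,\hat p)=1-\sum\sqrt{p_x\hat p_x}\le \eta/4$. This follows from $\mathbb E[(\sqrt{\hat p_x}-\sqrt{p_x})^2]\le \mathrm{Var}(\hat p_x)/p_x = O(1/N)$ summed over $k$ atoms, together with a concentration argument. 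Restricting $\hat p$ to $S_0$ loses $\hat p(S\setminus S_0)\le k\tau=c\eta$. We then let $q=\hat p|_{S_0}/\hat p(S_0)$; renormalization only increases $\sum\sqrt{p_xq_x}$. Choosing $c$ small, we get $\sum_x\sqrt{p_xq_x}\ge 1-\eta$. The renormalization factor is in $[1,1+O(\eta)]$, so $q_x=\Theta(p_x)$ still holds, which is item~4.

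The third step is the bucketing. Put $t_j=2^{-j}$ and $A_j=\{x\in S':t_j\le q_x<2t_j\}$, where $S'$ is the retained set described below. Every $q_x\ge\tau$, so only $j\le\log(k/(c\eta))$ can be nonempty, giving $L=O(\log(k/\eta))$ together with items~2 and~3. Item~5 is not automatic, because a level could carry tiny mass. We therefore prune: drop every level with $\hat\mu_j<\eta/(4L)$. The total dropped mass is at most $\eta/4$. Redefining $S'$, renormalizing $q$ again, and recomputing the levels costs another $O(\eta)$ in the Hellinger bound. By rescaling constants in $\eta$, we still obtain $\sum\sqrt{p_xq_x}\ge1-\eta$. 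After pruning, each surviving level has $\hat\mu_j\ge \eta/(4L)$, and by item~4 we get $\mu_j=\Omega(\eta/L)$.

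The main obstacle is the renormalization step. When we rescale, $q_x$ shifts by a factor close to $1$, and an element can cross a dyadic boundary, which could make a level slightly lighter than the threshold. We plan to avoid this by fixing the buckets using the pre-renormalization values, with the final renormalization factor absorbed into the bucket boundaries. Since the factor is $1+O(\eta)$, this at worst widens the band $[t_j,2t_j)$ by a constant. If that is not acceptable, we will simply redefine $t_j$ after the final renormalization, and pruning once with threshold $\eta/(8L)$ leaves enough slack for the inequality to survive.
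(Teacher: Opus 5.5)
Your proposal is correct and follows essentially the same route as the paper's proof: the empirical distribution with the standard Hellinger guarantee, a $\Theta(\eta/k)$ heaviness threshold certified by multiplicative Chernoff bounds, renormalization, dyadic bucketing, pruning of levels with mass below $\Theta(\eta/L)$, a second renormalization, and absorbing the final normalization factor into the thresholds $t_j$. Your primary fix, keeping the pre-renormalization levels and dividing each $t_j$ by the normalization factor, is exactly what the paper does and preserves the factor-$2$ band exactly; the fallback is unnecessary, and as stated it would not guarantee item~5, because recomputing dyadic levels after renormalization can leave a nonempty level with only a sliver of its former mass.
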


\begin{proof}
Draw $N=O\!\left(\frac{k\log(k/\delta)}{\eta}\right)$ samples from $p$, and let $\bar p$ be the empirical distribution. By the standard empirical Hellinger learning guarantee \cite{canonne2020short}, this allows us to learn the distribution $\bar p$ such that
\[
    \sum_x\sqrt{p_x\bar p_x}\ge 1-\eta
\]
with probability at least $1-\delta/3$.

\noindent Now, set
\[
    \rho=\frac{\eta}{100k},
    \qquad
    T=\{x:\bar p_x\ge 2\rho\}.
\]
By multiplicative Chernoff bounds and a union bound over the support of $p$, with probability at least $1-\delta/3$ the following hold simultaneously:
\[
    p_x\ge 4\rho \Longrightarrow x\in T,
    \qquad
    x\in T \Longrightarrow p_x\ge \rho,
    \qquad
    \bar p_x=\Theta(p_x)\quad\text{for all }x\in T .
\]
In particular,
\[
    p(T^c)=O(k\rho)=O(\eta),
    \qquad
    \bar p(T^c)=O(k\rho)=O(\eta).
\]

Let $\tilde q$ be $\bar p$ restricted to $T$ and renormalized. Then $\tilde q_x=\Theta(p_x)$ for every $x\in T$. Moreover, removing $T^c$ and renormalizing changes the error by only $O(\eta)$, so we still have
\[
    \sum_x\sqrt{p_x\tilde q_x}\ge 1-O(\eta).
\]

Now, we bucket $T$ according to the values of $\tilde q_x$, following the standard bucketing construction used in ~\cite{batu2001testing}. Thus each bucket is a dyadic level for $\tilde q$, and since $\tilde q_x=\Theta(p_x)$ on $T$, the true probabilities are within a constant factor of the bucketed probabilities on each level. Since every retained point has $\tilde q_x=\Omega(\eta/k)$, the number of possible nonempty dyadic levels is
\[
    L=O(\log(k/\eta)).
\]

Discard every level whose $\tilde q$-mass is less than $\eta/(100L)$. Since there are at most $L$ levels, the total discarded $\tilde q$-mass is $O(\eta)$. Since $\tilde q_x=\Theta(p_x)$ on $T$, the total discarded true mass is also $O(\eta)$.

Let $S'$ be the union of the remaining levels, and let $q$ be $\tilde q$ restricted to $S'$ and renormalized. This final normalization changes all remaining probabilities by a factor $1+O(\eta)$, so $q_x=\Theta(p_x)$ for every $x\in S'$. The same truncation and renormalization argument, followed by scaling $\eta$, gives
\[
    \sum_x\sqrt{p_xq_x}\ge 1-\eta.
\]

Finally, after adjusting the dyadic thresholds by the common normalization factor, each nonempty level $A_j$ satisfies
\[
    t_j\le q_x<2t_j
    \qquad\text{for all }x\in A_j.
\]
Each remaining level has $q$-mass at least $\Omega(\eta/L)$ by construction, and since $q_x=\Theta(p_x)$ on $S'$, the mass from $p$ is also $\Omega(\eta/L)$. This proves the lemma.
\end{proof}

\end{document}